\newcommand{\ie}{i.e.,\xspace}
\newcommand{\problemdef}[4]{
	\begin{center}
		\begin{minipage}{0.95\textwidth}
			\normalsize\textsc{#2} \smallskip \\
			\begin{tabularx}{\textwidth}{@{}l@{\hspace{3pt}}X}
				\normalsize\textbf{Input:} & \normalsize#3 \\
				\normalsize\textbf{#1:}    & \normalsize#4
			\end{tabularx}
		\end{minipage}
	\end{center}
}
\newcommand{\dprob}[4][Question]{\problemdef{#1}{#2}{#3}{#4}}
\DeclareMathOperator{\vote}{vote}
\newcommand{\Mp}{M^\prime}
\newcounter{captionedequationset} 
\newdimen\captionlength
\newcommand{\captionedequationset}[1]{
   ~\refstepcounter{captionedequationset}%
    \setlength{\captionlength}{\widthof{#1}} 
    \addtolength{\captionlength}{\widthof{Equation set~\thecaptionedequationset: }}
    \ifthenelse{\lengthtest{\captionlength < \linewidth }} 
    {\begin{center}
            Preference structure~\thecaptionedequationset: #1
        \end{center}} 
    { \begin{flushleft} 
        Preference structure~\thecaptionedequationset: #1 
        \end{flushleft}}}
\begin{document}
\title{Three-Dimensional Popular Matching\\ with Cyclic Preferences}
\author{
\'Agnes Cseh\inst{1,2}%
\and
Jannik Peters\inst{3}%
}
\authorrunning{\'A. Cseh \and J. Peters}
\institute{Hasso-Plattner-Institute, University of Potsdam, Germany \and
Institute of Economics, Centre for Economic and Regional Studies, Hungary\\
\email{agnes.cseh@hpi.de}\\
\and TU Berlin, Germany \\
\email{jannik.peters@tu-berlin.de}
}
\maketitle 
\begin{abstract} 
Two actively researched problem settings in matchings under preferences are popular matchings and the three-dimensional stable matching problem with cyclic preferences. In this paper, we apply the optimality notion of the first topic to the input characteristics of the second one. We investigate the connection between stability, popularity, and their strict variants, strong stability and strong popularity in three-dimensional instances with cyclic preferences. Furthermore, we also derive results on the complexity of these problems when the preferences are derived from master lists.
\keywords{popular matching \and three-dimensional stable matching \and cyclic preferences \and complexity \and Condorcet paradox}
\end{abstract}

\section{Introduction}
Partitioning agents into desirable groups is one of the core problems of algorithmic game theory. However, the lines between tractability and intractability are often very thin; introducing ties, incomplete lists or slight variations to the preference or group structures can make a previously tractable problem intractable. In this work, we aim to further draw this line by studying popularity in three-dimensional matching instances equipped with cyclic preferences. 

\subsection{Problem Setting}
In a \textit{three-dimensional (3D) matching instance}, we are given three sets of agents $A,B,$ and $C$, representing for example users, data sources, and servers~\cite{CJ13} or as it is commonly referred to in the literature~\cite{Man13,NH91}, men, women, and dogs. Each agent in $A,B,$ and $C$ declares a subset of the agents in $B,C,$ and $A$, respectively, acceptable. A matching $M$ consists of $(a,b,c) \in A \times B \times C$ triples such that $a$ finds $b$ acceptable, $b$ finds $c$ acceptable, and finally, $c$ finds $a$ acceptable; furthermore, each agent appears in at most one triple in~$M$.

In the problem variant we study, each agent possesses a strictly ordered preference list. \textit{Cyclic preferences} mean that agents in $A$ have preferences over the acceptable agents in $B$, agents in $B$ have preferences over the acceptable agents in $C$, and finally, agents in $C$ have preferences over the acceptable agents in~$A$. The standard problem is to decide whether such an instance admits a stable matching. Two intuitive stability notions have been investigated in the literature: a \textit{weakly stable matching} does not admit a triple so that all three agents would improve, while according to \textit{strong stability}, a triple already blocks if at least one of its agents improves, and the others in the triple remain equally satisfied.

The optimality criterion we study in this paper is \textit{popularity}, which is a well-studied concept in the context of two-sided matching markets. 
Given two matchings~$M$ and $M'$, matching $M$ is more popular than $M'$ if the number of agents preferring $M$ to $M'$ is larger than the number of agents preferring $M'$ to $M$. A matching~$M$ is called \emph{popular} if there is no matching~$M'$ that is more popular than~$M$. Colloquially speaking, a popular matching is a matching that would not lose a head-to-head election against any other matching if the agents were allowed to vote between the matchings. 

\subsection{Related Work}

We first review existing work on matchings under preferences in the three-dimensional setting, and then highlight the most important improvements on popular matchings.

\subsubsection{Stability in 3 Dimensions} 
After the introduction of stable matchings by \citet{GS62} and their celebrated algorithm to solve the problem in bipartite graphs, the study of three-dimensional stable matchings was initiated by \citet{Knu76}, who asked about a generalization of stable matchings to triples. Subsequently, \citet{NH91} studied a stable matching variant with three genders, where agents of one gender have a preference list over \textit{pairs} of the other two genders. The goal in this model is to find a set of disjoint triples that is not blocked by any triple outside of it. 
\citet{NH91} and independently \citet{Sub94} were able to prove that it is \NP-complete to decide whether such a three-dimensional stable matching exists. Their result was then generalized by \citet{Hua07}, who incorporated ties and stronger notions of stability, as well as restricted preference structures in this model. 
He showed that all these variants stay \NP-complete as well. \citet{Dan03} identified an even further restricted preference structure that allows for a polynomial-time algorithm for the existence problem.

\subsubsection{3D-Stable Matchings with Cyclic Preferences} One direction proposed by \citet{NH91} was to generalize their work to cyclic preferences. This question lead to a family of papers. Bir\'o and McDermid~\cite{BM10} showed that deciding whether a weakly stable matching exists is NP-complete if preference lists are allowed to be incomplete, and that the same complexity result holds for strong stability even with complete lists. However, the combination of complete lists and weak stability proved to be extremely challenging to solve.

For this setting, Boros et al.~\cite{BGJK04} proved that each instance admits a weakly stable matching for $n \leq 3$, where $n$ is the size of each agent set in the tripartition. Eriksson et al.~\cite{ESS06} later extended this result to $n \leq 4$. Additionally, Pashkovich and Poirrier~\cite{PP20} further proved that not only one, but at least two stable matchings exist for each instance with $n = 5$. By this time, the conjecture on the guaranteed existence of a weakly stable matching in 3D instances with complete cyclic preferences became one of the most riveting open questions in the matching under preferences literature~\cite{Knu76,Man13,Woe13}. Surprisingly, Lam and Plaxton~\cite{LP19} recently disproved this conjecture by showing that weakly stable matchings need not exist for an arbitrary $n$, moreover, it is \NP-complete to determine whether a given instance with complete lists admits a weakly stable matching.

The problem is relevant to applications as well, as shown by the papers of \citet{CJ13}, \citet{RGJTPSH19}, and \citet{MGWJG20}, who all studied 3D-cyclic stable matchings in the context of computer networks, as well as by the work of \citet{BCG20}, who applied it to a Paris apartment assignment problem. Additionally, \citet{EO2018} set up constraint programming models for the problem. They discussed instances where agents of the same class have identical preference lists. This type of preference structure is also called a \textit{master list}. Besides them, \citet{BHKN20} also investigated master lists in the context of 3D stable matchings, and there is a large set of results on 2D stable or popular matchings with master lists in the input~\cite{KNN14, Kam19, IMS08, MR20}.

\subsubsection{Popular Matchings}
The concept of a popular matching corresponds to the notion of a weak Condorcet winner in voting. In the context of matchings it was first introduced by \citet{Gar75} for matching markets with two-sided preferences, and then studied by \citet{AIKM07} in the house allocation problem. Polynomial time algorithms to find a popular matching were given in both settings. These papers inspired a plethora of work on popularity in the house allocation problem. Most importantly, \citet{SM10} extended the model of \citet{AIKM07} with capacities on the houses, while \citet{MI11} studied a weighted variant.%

In the classic two-sided preferences model, it was already noticed by \citet{Gar75} that all stable matchings are popular, which implies that in standard bipartite stable matching instances, popular matchings always exist. In fact stable matchings are the smallest size popular matchings, as shown by \citet{BIM10}, while maximum size popular matchings can be found in polynomial time as well~\cite{HK13,Kav14}.

Only recently \citet{FKPZ19} and  \citet{GMSZ21} resolved the long-standing open question that it is \NP-complete to find a popular matching in a non-bipartite matching instance.%

\subsubsection{Strongly Popular Matchings} A further concept we study is that of a strongly popular matching, corresponding to a strong Condorcet winner, \ie a matching that wins every head-to-head election. This concept was introduced by \citet{BIM10}, who showed that a strongly popular matching in roommates instances exists if and only if it is the unique stable matching. The open question whether a strongly popular matching in a roommates instance with ties can be found in polynomial time was recently answered positively by \citet{BB20}, who observed that a strongly popular matching must be the unique mixed popular matching. Strong popularity was very recently extended to $b$-matchings as well by \citet{KM20}.

\subsubsection{Popularity in 3 Dimensions} \citet{BB20} showed that it is intractable to find a popular partition into sets of at most size three, even if the ranking of all sets by all agents is the same. This however is different from the 3D-cyclic model in both the structure of the preferences, since the agents in their model have a preference list over subsets of size $2$ or $3$, as well as in the structure of the solution, since they allow sets of size $2$ and $3$. Both \citet{BB20} and \citet{LP19} mentioned the 3D-cyclic popular matching problem as an interesting future research direction.

\section{Preliminaries}
\label{sec:prelims}
We now define the notation we use and the problems we investigate in this paper.

\subsection{Input and Output Formats}

\subsubsection{Input and Notation} We are given three sets of \emph{agents} $A,B$, and $C$. We denote by $V = A \cup B \cup C$ the set of all agents and we call $A,B$, and $C$ the \emph{agent classes} of our instance. Further we assume that $\lvert A \rvert = \lvert B \rvert = \lvert C \rvert = n$. 
Each agent in $A$ has a strict preference list over a subset of agents in $B$, each agent in $B$ has a strict preference list over a subset of agents in $C$, and finally, each agent in $C$ has a strict preference list over a subset of agents in $A$. These preference lists define for each agent $x$ a strict order $\succ_x$, which we call the \emph{preference list} of $x$ and say that $x$ finds the agents in $\succ_x$ \emph{acceptable}. For any two agents $y,z$ such that $y \succ_x z$, we say that $x$ \emph{prefers} $y$ to $z$.

\subsubsection{Master Lists} When defining master lists, we use the terminology from the book of Manlove~\cite{Man13}. We say that the preferences of agents in $X \subseteq V$ are \emph{derived from a master list} if there is a master preference list from which the preferences of each $x \in X$ can be obtained by deleting some agents. This means that the preferences might be incomplete, but the relative preferences between acceptable agents are the same in each $\succ_x$, where $x \in X$. We say that an instance is derived from a $k$-master list for $k \in \lbrace 1,2,3\rbrace$ if the preferences of $k$ of the agent classes of our instance are derived from a master list.

\subsubsection{Matchings} A \emph{matching} $M$ is a subset of $A \times B \times C$, such that each agent appears in at most one triple and for each $(a,b,c) \in M$, $a$ finds $b$ acceptable, $b$ finds $c$ acceptable, and $c$ finds $a$ acceptable. If $(a,b,c) \in M$, we also write $M(a) = b, M(b) = c, $ and $M(c) = a.$ If an agent $x$ does not appear in any triple in $M$, we write $M(x) = x$ and say that the agent $x$ is \emph{unmatched}. For convenience in notation we assume that for any agent $x$, $x$ itself appears at the end of $\succ_x$. The preference relation $\succ$ naturally extends to the comparison of two triples by an agent who is in both triples.

\subsection{Optimality Concepts}
\label{sec:optconc}

\subsubsection{Weak and Strong Stability} A triple $t=(a, b, c)$ is said to be a \textit{strongly blocking triple} to matching $M$ if each of $a, b$, and $c$ prefer $t$ to their respective triples in~$M$. Practically, this  means that $a, b$, and $c$ could abandon their triples to form triple $t$ on their own, and each of them would be strictly better off in $t$ than in~$M$. If a matching $M$ does not admit any strongly blocking triple, then $M$ is called a \textit{weakly stable} matching.
Similarly, a triple $t=(a, b, c)$ is called a \textit{weakly blocking triple} if at least two agents in the triple prefer $t$ to their triple in $M$, while the third agent does not prefer her triple in $M$ to~$t$. This means that at least two agents in the triple can improve their situation by switching to $t$, while the third agent does not mind the change. A matching that does not admit any weakly blocking triple is referred as \textit{strongly stable}. By definition, strongly stable matchings are also weakly stable, but not the other way round. Observe that it is impossible to construct a triple $t$ that keeps exactly two agents of a triple equally satisfied, while making the third agent happier, since the earlier two agents need to keep their partners to reach this, which then already defines the triple as one already in~$M$.%

\subsubsection{Weak and Strong Popularity} Given an agent $x$ and two matchings $M$ and $\Mp$, we define 
\[
\vote_x(\Mp, M) = 
    \begin{cases*}1, &if $\Mp(x) \succ_x M(x)$ \\
    0, &if  $\Mp(x) = M(x)$ \\
    -1, &if  $M(x) \succ_x \Mp(x)$
    \end{cases*}
\]
\ie $\vote_x(\Mp, M)$ represents whether the agent $x$ would prefer to be in $\Mp$ or in~$M$. We call $\Mp$ \emph{more popular} than $M$ if 
\[
\Delta(\Mp, M) \coloneqq \sum\limits_{x \in V} \vote_x(\Mp, M) \ge 1
\]\ie if $\Mp$ would win against $M$ in a head-to-head election.
Matching $M$ is called \emph{popular} if no matching is more popular than~$M$.
Using this we can now define the popular matching problem in 3 dimensions.
\dprob{\textsc{3d-cyclic popular matching with incomplete lists (3dpmi)}}{Sets $A,B,C$ with a cyclic preference structure.}{Does a popular matching exist?} 
Further we also study the corresponding verification problem.
\dprob{\textsc{3d-cyclic popular matching verification with incomplete lists (3dpmvi)}}{Sets $A,B,C$ with a cyclic preference structure and a matching~$M$.}{Is $M$ popular?}

The notion of popularity can be strengthened even further to what is commonly referred to as a strongly popular matching. A matching $M$ is \textit{strongly popular} if it is more popular than all other matchings~$\Mp$. It is easy to see that each instance can admit at most one strongly popular matching. Now we can define the problems of existence and verification for a strongly popular matching.

\dprob{\textsc{3d-cyclic strongly popular matching with incomplete lists (3dspmi)}}{Sets $A,B,C$ with a cyclic preference structure.}{Does a strongly popular matching exist?}
\dprob{\textsc{3d-cyclic strongly popular matching verification with incomplete lists (3dspmvi)}}{Sets $A,B,C$ with a cyclic preference structure and a matching~$M$.}{Is $M$ strongly popular?}
If we want to indicate that the preference lists are complete, \ie every agent in $A$ ranks all agents in $B$, every agent in $B$ ranks all agents in $C$, and every agent in $C$ ranks all agents in $A$, we omit the \textsc{i} from the end of the problem name.

\subsubsection{$A \cup B$-Popularity}
Our last optimality concept relies on a recent real application, described by \citet{BCG20} who analyzed the Paris public housing market. In their work, $A$ consists of various housing institutions such as the Ministry of Housing, $B$ is the set of households looking for an apartment, and finally, $C$ contains the social housing apartments that are to be assigned to these households. Institutions have preferences over household-apartment pairs, and households rank the available apartments in their order of preference. One characteristic feature of this application is that apartments are treated as objects without preferences, because they should be matched through the institutions.

Here we will study a restricted variant, listed as one of the three most typical interpretations of the institutions' preferences by \citet{BCG20}: institutions have preferences directly over the households, no matter which apartment they are matched to. This problem setting translates into a 3-dimensional matching instance, where agents in $C$ only have the constraint to be matched to an acceptable agent from $A$, while classes $A$ and $B$ submit preferences over acceptable agents in classes $B$ and $C$, respectively. While \citet{BCG20} focused on the existence of a Pareto optimal solution, here we define popularity for such instances.

Matching $\Mp$ is $A \cup B$-more popular than matching $M$ if 
\[
\sum\limits_{x \in A \cup B} \vote_x(\Mp, M) \ge 1,
\]\ie if $\Mp$ would win against $M$ in a head-to-head election where only agents in $A \cup B$ are allowed to vote. Analogously, we call a matching $M$ $A \cup B$-popular if there is no matching that is $A \cup B$-more popular than~$M$. This definition tallies the votes of each household and institution, but treats apartments as objects. To overcome the technical difficulty of one institution handling more than one apartment and to give a vote to the institution in the decision over each apartment, we can simply clone the institutions as many times as many apartments they oversee.%
 
\subsection{Our Contribution}

\begin{table}[tb]

	\begin{center}
			\resizebox{\columnwidth}{!}{ 
		\begin{tabular}{l| c c | c c|} \cline{2-5}	
			& \multicolumn{2}{c|}{Existence} & \multicolumn{2}{c|}{Verification}   \\ \cline{2-5}
			&incomplete & complete & incomplete & complete \\ \hline
			\multicolumn{1}{|l|}{Popularity} & \NP-h. Theorem~\ref{thrm_np_pop_inc} & ? & \NP-c. Theorem~\ref{thrm_np_pop_v_inc} & ? \\
			\multicolumn{1}{|l|}{Strong Popularity} &\NP-h. Theorem~\ref{thrm_strongpop_np}& ?&\NP-c. Theorem~\ref{th:3dspmvi}& ? \\
			\multicolumn{1}{|l|}{$A \cup B$-Popularity} &\NP-h. Theorem~\ref{thm:aubpop}& $\phantom{-}\in$\P \;Theorem~\ref{thm:aubpol}& ? &$\phantom{-}\in$\P \;Theorem~\ref{thm:aubpol} \\
			\hline
		\end{tabular}
		}	
		\caption{Overview of the complexity results shown in Section~\ref{sec:hardness}. The columns refer to the cases with incomplete and complete lists, respectively. Question marks denote open problems---these are briefly discussed in Section~\ref{sec:openp}.}
		\label{ta:results}
	\end{center}
\end{table}
We provide structural results and a complexity analysis of the aforementioned popular matching problems. First we show in Section~\ref{sec:struct} that several implications from the 2-sided matching world do not hold. In 3 dimensions, stable matchings are not necessarily popular and strongly popular matchings are not necessarily stable. 

Then in Section~\ref{sec:hardness} we turn to the complexity of verifying and computing a popular, strongly popular, or $A \cup B$-popular matching when lists are complete, and show that the defined verification and search problems for all variants except $A \cup B$-popularity verification are \NP-hard. We complement these results with positive ones for $A \cup B$-popularity with complete lists. Table~\ref{ta:results} summarizes our results. 

Following this we investigate instances derived from master lists in Section~\ref{sec:master}, and show that in general for $3$-master lists and $2$-master lists popular matchings do not exist. Finally, in Section~\ref{sec:openp} we list some interesting problems that are left open by this work. Our hardness proofs have been delegated to the Appendix.

\section{Structural Results}
\label{sec:struct}

As a first step we investigate the relations between stability and popularity. In the traditional stable marriage and roommates problems, stable matchings form a subset of popular matchings~\cite{Gar75}. Moreover, if a strongly popular matching exists, then it must be the unique popular matching and also the unique stable matching in the instance~\cite{BIM10}.

First we show that in 3 dimensions, neither kind of stability implies popularity by presenting an instance with a strongly stable matching that is not popular.
\begin{lemma}
	There is an instance $\mathcal I$ of \textsc{3dpmi} with a matching $M$ such that $M$ is strongly stable but not popular. 
	\label{le:ssnotpop}
\end{lemma}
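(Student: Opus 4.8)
The plan is to exhibit a small explicit instance with $n=3$ and incomplete lists, together with a matching $M$ that is strongly stable and a second matching $M'$ witnessing that $M$ is not popular.

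First I would take $A=\{a_1,a_2,a_3\}$, $B=\{b_1,b_2,b_3\}$, $C=\{c_1,c_2,c_3\}$, the ``diagonal'' matching $M=\{(a_1,b_1,c_1),(a_2,b_2,c_2),(a_3,b_3,c_3)\}$, and the ``shifted'' matching $M'=\{(a_1,b_2,c_3),(a_2,b_3,c_1),(a_3,b_1,c_2)\}$; here $a_i$ is paired with $b_i$ in $M$ and with $b_{i+1}$ in $M'$ (indices mod $3$), and analogously one layer down. Every agent is then given a preference list of length exactly two, namely its $M$-partner and its $M'$-partner. The one real design choice is the order inside these length-two lists: in each of the three triples of $M'$ I want exactly one agent to rank its $M'$-partner \emph{below} its $M$-partner, and I would place these three ``losers'' in three different triples of $M'$ (for instance, make $a_1$, $b_1$ and $c_1$ the losers and let every other agent top-rank its $M'$-partner).

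The next step is to verify that the mutually acceptable triples of this instance are exactly the six triples of $M\cup M'$: since each list has length two, this is the finite check that, running over the $3\cdot2\cdot2$ choices of $a_i$, then an acceptable $b_j$, then an acceptable $c_k$, only the triples of $M$ and of $M'$ close up cyclically. Granting this, the stability of $M$ reduces to inspecting the three triples of $M'$, and each of them contains its designated loser, who strictly prefers its $M$-partner; hence none of them is weakly blocking, so $M$ is strongly stable. Finally, by construction $\vote_x(M',M)=+1$ for the six non-loser agents and $-1$ for the three losers, so $\Delta(M',M)=6-3=3\ge 1$; thus $M'$ is more popular than $M$ and $M$ is not popular.

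The only delicate point is the ``no accidental triple'' step: with incomplete lists one has to make sure that the acceptability relation is rigid enough that no unintended cyclic triple appears and secretly weakly blocks $M$. Keeping every list of length two and using the cyclic pairing above is precisely what makes this work; the remaining steps are direct sign-counting.
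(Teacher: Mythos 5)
Your construction is correct and follows essentially the same idea as the paper's proof: the diagonal matching $M$, the cyclically shifted matching $M'$ with exactly one ``loser'' placed in each $M'$-triple (so no acceptable triple weakly blocks $M$), and the $6$-versus-$3$ vote count showing $M'$ is more popular. The only difference is cosmetic: you truncate every list to length two so that the triples of $M\cup M'$ are the only acceptable ones (which your enumeration indeed confirms), whereas the paper's instance in Figure~\ref{fi:stablenotpopular} uses complete lists and rules out blocking triples by a direct case check; both variants establish the \textsc{3dpmi} statement.
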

\begin{proof}
	Consider the preference profiles depicted in Figure~\ref{fi:stablenotpopular}. First we prove that the matching $M = \lbrace (a_1,b_1,c_1),(a_2,b_2,c_2),(a_3, b_3,c_3)\rbrace $ is strongly stable. As we observed in Section~\ref{sec:optconc}, at least two agents in a weakly blocking triple must improve their match. There are only 6 possible improvements to $M$: $b_1$ switches to $c_2$, $c_1$ switches to $a_2$, $a_2$ switches to $b_3$, $b_2$ switches to $c_3$, $a_3$ switches to $b_1$, and finally, $c_3$ switches to $a_1$. It is easy to check that no two of these will keep the third agent involved at least as satisfied as she is in~$M$.
	
	However, the matching $M' = \lbrace (a_1, b_2, c_3), (a_2, b_3, c_1), (a_3, b_1, c_2) \rbrace$ is more popular, since all agents except for $\lbrace a_1, b_3, c_2\rbrace $ prefer it to~$M$.
	\begin{figure}[tb]
	\begin{align*}
		&a_1\colon \underline{b_1}, \boldsymbol{b_2}, b_3 &&a_2\colon \boldsymbol{b_3}, \underline{b_2}, b_1 &&a_3\colon \boldsymbol{b_1}, \underline{b_3}, b_2 \\
		&b_1\colon \boldsymbol{c_2}, \underline{c_1}, c_3 &&b_2\colon \boldsymbol{c_3}, \underline{c_2}, c_1 &&b_3\colon \underline{c_3}, c_2, \boldsymbol{c_1} \\
		&c_1\colon \boldsymbol{a_2}, \underline{a_1}, a_3 &&c_2\colon \underline{a_2}, a_1, \boldsymbol{a_3} &&c_3\colon \boldsymbol{a_1}, \underline{a_3}, a_2
	\end{align*}
	\caption{Compact representation of the preferences in Lemma~\ref{le:ssnotpop}. Agent $a_1$ has the preference list $b_1 \succ_{a_1} b_2 \succ_{a_1} b_3$. The triples in $M$ are underlined. Bold font denotes the more popular matching~$M'$.}
	\label{fi:stablenotpopular}
    \end{figure}  
\qed \end{proof}

Secondly we show that for 3-dimensional instances, even strong popularity does not imply weak stability.%
\begin{lemma}
	There is an instance $\mathcal I$ of \textsc{3dspmi} with a matching $M$ such that $M$ is strongly popular but not weakly stable.
	\label{obs:strongpop_weakstab}
\end{lemma}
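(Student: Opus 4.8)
The plan is to follow the pattern of Lemma~\ref{le:ssnotpop} and exhibit one explicit instance, now with incomplete lists, together with the witnessing matching~$M$. The conceptual tension to overcome is that strong popularity is a \emph{global} requirement---$M$ must win \emph{every} head-to-head election---whereas the presence of a strongly blocking triple is a purely \emph{local} defect. So the instance must be engineered so that the only way a competitor can profit from that triple is globally catastrophic for it, and so that no other matching is competitive either.

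Concretely, I would take $n=3$, let $M=\{(a_1,b_1,c_1),(a_2,b_2,c_2),(a_3,b_3,c_3)\}$, and choose the preferences so that the intended blocking triple $(a_1,b_2,c_3)$ meets \emph{all three} triples of~$M$: set $a_1\colon b_2,b_1$, then $b_2\colon c_3,c_2$, then $c_3\colon a_1,a_3$. This instantly makes $(a_1,b_2,c_3)$ a strongly blocking triple, since each of $a_1,b_2,c_3$ strictly prefers it to~$M$, so $M$ is not weakly stable; and since the triple intersects each of the three triples of~$M$, realizing it necessarily displaces six agents ($b_1,c_1$ from the first triple, $a_2,c_2$ from the second, $a_3,b_3$ from the third).

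To ensure these six displaced agents cannot recombine---and, more importantly, that almost no other matching exists at all---I would make every remaining list a singleton: $a_2\colon b_2$, $a_3\colon b_3$, $b_1\colon c_1$, $b_3\colon c_3$, $c_1\colon a_1$, $c_2\colon a_2$. A one-line check then shows the only acceptable triples are $(a_1,b_1,c_1)$, $(a_1,b_2,c_3)$, $(a_2,b_2,c_2)$ and $(a_3,b_3,c_3)$, whose pairwise conflict graph is a star centred at $(a_1,b_2,c_3)$; hence the only matchings are $M$, its seven proper sub-matchings, and the singleton $\{(a_1,b_2,c_3)\}$. It then remains to compute $\vote$-balances: every proper sub-matching of~$M$ leaves at least one full triple unmatched while changing no one else's partner, so it scores at most $-3$ against~$M$; and $\{(a_1,b_2,c_3)\}$ improves exactly three agents while unmatching the other six, scoring $3-6=-3$. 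Thus $M$ strictly beats every competitor, i.e.\ $M$ is strongly popular.

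The one place needing care---and the only real obstacle---is verifying that the single ``helpful'' competitor (the matching built from the blocking triple, which genuinely improves three agents) cannot be enlarged by a fourth triple that rescues some displaced agent and tips the balance; restricting essentially every preference list to a single acceptable partner is exactly what precludes this, at the harmless cost that $M$ is no longer a matching in which everybody gets a top choice. It is also worth remarking that the same instance shows that, unlike the two-sided setting of \citet{BIM10}, a strongly popular matching need not be even weakly stable, nor coincide with any stable matching.
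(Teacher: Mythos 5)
Your construction is correct, and the verification goes through: with the lists you specify, the only acceptable triples are $(a_1,b_1,c_1)$, $(a_2,b_2,c_2)$, $(a_3,b_3,c_3)$ and $(a_1,b_2,c_3)$, the last of which intersects each triple of $M$, so the only matchings are the sub-matchings of $M$ and the singleton $\{(a_1,b_2,c_3)\}$, and each of these loses to $M$ by at least $3$ votes while $(a_1,b_2,c_3)$ strongly blocks $M$. The core idea is the same as in the paper's proof---the diagonal matching $M$ on a $3+3+3$ instance together with the single ``cross'' triple $(a_1,b_2,c_3)$ that meets all three triples of $M$, so that realizing it dislodges six agents---but the execution differs in an instructive way. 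The paper keeps all preference lists complete (every agent ranks all three candidates, with the six agents outside the blocking triple placed at their top choice in $M$), and therefore has to argue strong popularity by a case analysis over all ways a competitor could improve $a_1$, $b_2$, or $c_3$; your version instead truncates every other list to a singleton, collapsing the space of matchings to a handful and reducing the verification to a two-line vote count. What you lose is that your example lives genuinely in the incomplete-list setting \textsc{3dspmi}, whereas the paper's instance has complete lists and hence also witnesses the same separation for \textsc{3dspm}---a stronger statement that is relevant given that the complete-list versions of these problems are exactly the cases the paper leaves open. Since the lemma as stated only claims an instance of \textsc{3dspmi}, your simpler argument fully suffices for it.
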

\begin{proof}
    Consider the preference profiles depicted in Figure~\ref{fig:pref_2} and the matching $M \coloneqq \lbrace (a_1,b_1,c_1),(a_2,b_2,c_2),(a_3, b_3,c_3)\rbrace$.
	As can be easily seen, $M$ is not weakly stable, since $(a_1, b_2, c_3)$ is a strongly blocking triple.

	Matching $M$ is however strongly popular. Assume indirectly that there is a matching $\Mp$ such that $M$ is not more popular than~$\Mp$. The only three agents who can possibly improve are $a_1, b_2,$ and $c_3$, because the remaining 6 agents are matched to their first choice. If all three of them improve in $\Mp$, then $(a_1, b_2, c_3) \in \Mp$, and all possible matchings for the remaining 6 agents match at least 4 of them to an agent who is not their top choice. The other possibility is that at least one of $a_1, b_2,$ and $c_3$ remains in the same triple in $\Mp$ as she was in~$M$. Due to symmetry, we can assume without loss of generality that this agent is $a_1$, and thus, $(a_1, b_1, c_1) \in \Mp$. The only agent who can improve from this point on is~$b_2$, and she must switch to~$c_3$. This $\Mp$ can be only finished by taking $(a_2, b_2, c_3), (a_3, b_3, c_2) \in \Mp$ or by taking $(a_3, b_2, c_3), (a_2, b_3, c_2) \in \Mp$, both of which make only $b_2$ better off, and exactly 3 out of these 6 agents worse off than they were in~$M$. Thus $M$ is strongly popular.
		\begin{figure}[tb]
	    \begin{align*}
		&a_1\colon \boldsymbol{b_2}, \underline{b_1}, b_3 &&a_2\colon \underline{b_2}, b_3, b_1 &&a_3\colon \underline{b_3}, b_2, b_1 \\
		&b_1\colon \underline{c_1}, c_2, c_3 &&b_2\colon \boldsymbol{c_3}, \underline{c_2}, c_1 &&b_3\colon \underline{c_3}, c_2, c_1 \\
		&c_1\colon \underline{a_1}, a_2, a_3 &&c_2\colon \underline{a_2}, a_1, a_3 &&c_3\colon \boldsymbol{a_1}, \underline{a_3}, a_2
	\end{align*}
	\caption{Representation of the preferences in Lemma~\ref{obs:strongpop_weakstab}. The triples in $M$ are underlined and the strongly blocking triple is in bold.}
	\label{fig:pref_2}
	\end{figure}
\qed \end{proof}

Our third result shows that in an instance with complete lists, a strongly popular matching can only be blocked by 
strongly blocking triples.
\begin{lemma}
In an instance $\mathcal I$ of \textsc{3dspm}, a strongly popular and weakly stable matching $M$ is also strongly stable.
\label{lem:strong_pop_strong_stab}
\end{lemma}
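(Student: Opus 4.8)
The plan is a proof by contradiction. Suppose $M$ is strongly popular and weakly stable, yet $M$ admits a weakly blocking triple $t=(a,b,c)$. I will exhibit a matching $M'\neq M$ with $\Delta(M',M)\geq 0$, hence $\Delta(M,M')\leq 0$, which contradicts that $M$ is more popular than every other matching. The first step is to read off the structure of~$t$. By definition of a weakly blocking triple, at least two of $a,b,c$ strictly prefer $t$ to~$M$, and the remaining agent does not prefer its $M$-triple to~$t$. That remaining agent cannot strictly prefer $t$ either, for then all three would strictly improve and $t$ would be a strongly blocking triple, contradicting the weak stability of~$M$; hence exactly two agents of $t$ strictly improve and the third is indifferent between $t$ and~$M$. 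We treat the case where the indifferent agent is~$c$, the other two being symmetric under cyclic rotation of the classes. Since $c$ ranks only agents of~$A$, indifference means $M(c)=a$, so $a$ is matched in~$M$ and its triple is $T_1=(a,M(a),c)$. Let $T_2=(a',b,c')$ be the triple of $M$ through~$b$ (with $c'=M(b)$), assuming for now that $b$ is matched in~$M$. As $a$ strictly improves, $b\succ_a M(a)$, so $b\neq M(a)$; therefore $T_1\neq T_2$ and the six agents $a,M(a),c,a',b,c'$ are pairwise distinct. As $b$ strictly improves, $c\succ_b M(b)$, so in particular $c\neq c'$.

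Next I would define $M'$ by deleting $T_1$ and $T_2$ from $M$ and inserting $t=(a,b,c)$ and $t'=(a',M(a),c')$, leaving every other triple of $M$ unchanged. This is exactly where completeness of the preference lists is used: it guarantees that both $t$ and $t'$ are acceptable triples, so $M'$ is a valid matching (with incomplete lists, $t'$ need not be feasible). Moreover $M'\neq M$, since $(a,b,c)\notin M$ because $M(a)\neq b$. It remains to tally the votes; only the six agents above can change the partner in the class they rank. Agent $a$ moves from $M(a)$ to~$b$, and agent $b$ moves within~$C$ from $M(b)=c'$ to~$c$, so by the previous paragraph both strictly prefer~$M'$; agents $c$ and $c'$ keep their ranked partner ($c$ is matched to $a$ in both matchings, $c'$ to $a'$ in both), so they are indifferent; and each of $a'$ and $M(a)$ contributes at least~$-1$. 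Therefore the two guaranteed gains of $a$ and $b$ offset the at most two losses of $a'$ and $M(a)$, giving
\[
\Delta(M',M)\ =\ \vote_a(M',M)+\vote_b(M',M)+\vote_{a'}(M',M)+\vote_{M(a)}(M',M)\ \geq\ 0,
\]
which is the desired contradiction, so $M$ admits no weakly blocking triple and is strongly stable.

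It still remains to treat the degenerate case in which $b$ is unmatched in~$M$: here one deletes only $T_1$ and inserts $t=(a,b,c)$, leaving $M(a)$ unmatched in~$M'$, and the same count gives $\Delta(M',M)=\vote_a(M',M)+\vote_b(M',M)+\vote_{M(a)}(M',M)\geq 1$, which is even stronger than needed. I expect the delicate points to be (i) extracting the structure of a weakly blocking triple in the cyclic model, in particular using weak stability to pin the non-improving agent down to being \emph{exactly} indifferent; and (ii) choosing the local exchange so that the only agents that could be harmed, $a'$ and $M(a)$, are precisely offset by the guaranteed gainers $a$ and~$b$ while $c$ and $c'$ stay indifferent. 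The use of complete lists to keep $t'$ feasible is exactly the place where the completeness hypothesis of the statement is consumed.
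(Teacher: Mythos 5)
Your proof is correct and follows essentially the same route as the paper's: pin down the non-improving agent of the weakly blocking triple as exactly indifferent via weak stability, then swap the two $M$-triples meeting $t$ for $t$ itself plus the triple of the three leftover agents, and count two strict gainers against at most two losers to contradict strong popularity. The only differences are cosmetic: you fix the indifferent agent as $c$ rather than $b$, you make explicit the (implicit in the paper) use of complete lists for the feasibility of the leftover triple, and you treat an unmatched-agent case that in fact cannot arise, since with complete lists a strongly popular matching must be perfect.
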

\begin{proof}
	Consider a triple $t=(a,b,c)$ and assume that $t$ strongly blocks~$M$. Since $M$ is weakly stable, one of the three agents needs to have the same partner in $t$ and in~$M$. Without loss of generality we assume that this agent is $b$, and thus $b$ and $c$ were matched in $M$ as well. Let $(a, \beta, \gamma), (\alpha,b,c) \in M$ be triples in~$M$. Since $(a,b,c)$ is a strongly blocking triple to $M$, we know that $a \succ_c \alpha$.
	The matching $M' = M \setminus (a, \beta, \gamma) \setminus (\alpha,b,c) \cup (\alpha, \beta, \gamma) \cup (a,b,c)$
	leads to at least two agents, $a$ and $c$, preferring $M'$ to $M$, and at most two agents, $\alpha$ and $\gamma$, preferring $M$ to~$M'$. This contradicts the assumption that $M$ was strongly popular.
\qed \end{proof}

\section{\NP-Hardness Results}
\label{sec:hardness}

In this section we prove hardness for all our problems with incomplete lists, except for $A \cup B$-popularity verification. We also show that $A \cup B$-popularity can be verified and an $A\cup B$-popular matching can be found in polynomial time if preference lists are complete. For a structured summary of these results, please consult Table~\ref{ta:results}.

\subsection{Popularity}
We start by showing that it is \NP-hard to determine whether an instance with incomplete lists admits a popular matching. For this we use a restricted, but still \NP-complete variant of \textsc{3sat}, known as \textsc{(2,2)-e3-sat}~\cite{BKS03}.
\dprob{\textsc{(2,2)-e3-sat}}{A set $X$ of variables and a set $\mathcal{C}$ of clauses of size exactly $3$ such that each variable appears in exactly two clauses in positive form and in exactly two clauses in negative form.}{Is there a satisfying assignment for $\mathcal{C}$?}

\begin{restatable}{thm}{popnphard}
It is \NP-hard to decide whether a \textsc{3dpmi} instance admits a popular matching, even if each agent finds four other agents acceptable. This holds even if the preferences are derived from a $2$-master list.
	\label{thrm_np_pop_inc}
\end{restatable}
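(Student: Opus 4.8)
The plan is to reduce from \textsc{(2,2)-e3-sat}. Given a formula with variable set $X$ and clause set $\mathcal{C}$, I would construct a \textsc{3dpmi} instance so that popular matchings correspond exactly to satisfying assignments. Since each variable occurs twice positively and twice negatively, each clause has size three, and $\lvert X\rvert = 3\lvert \mathcal C\rvert/3 = \lvert\mathcal C\rvert$ (wait -- actually $4\lvert X\rvert = 3\lvert \mathcal C\rvert$, so $\lvert\mathcal C\rvert = \tfrac43\lvert X\rvert$); the bounded occurrence structure is what will let me keep every preference list of length at most four. The core idea: for each variable $x$ build a small ``gadget'' -- a constant-size cyclic sub-instance that has exactly two locally popular configurations, one coding $x=\texttt{true}$ and one coding $x=\texttt{false}$, mimicking the two perfect matchings in a short even cycle. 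For each clause $C$ build a clause gadget containing an agent (or triple of agents) that can be ``satisfied'' from the literal gadgets; if at least one of its three literals is set to make $C$ true, the clause gadget admits a stable/popular local completion, and if all three literals are false, the clause gadget forces a deficiency that some alternative matching exploits to beat every candidate matching, destroying popularity. To enforce the $2$-master-list restriction I would fix a single global linear order on, say, the $B$-agents and the $C$-agents, and design all gadgets so that each agent's (incomplete) list is an order-respecting sub-list of these two master lists; only the $A$-agents' lists need to be gadget-specific.

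The key steps, in order: (1) Describe the variable gadget and verify it has precisely two ``popular'' internal states, and that in any popular matching of the whole instance the gadget must be in one of these two states (a ``consistency'' lemma -- any deviation inside the gadget can be locally improved upon by a more popular matching). (2) Describe the clause gadget and the connector agents linking each clause to the two occurrences of each of its literals, and show: if the assignment read off the variable gadgets satisfies $C$, the clause gadget has a ``happy'' local matching; if not, there is a matching $M'$ of the whole instance that flips the clause gadget and a bounded number of literal-gadget agents so that strictly more agents improve than worsen -- hence no popular matching exists when some clause is unsatisfied. (3) Conversely, show that if the formula is satisfiable, assembling the variable gadgets in the states given by a satisfying assignment, together with the happy clause-gadget matchings, yields a matching that is popular: here I would argue that any would-be more popular $M'$ must change some gadget, and a counting argument gadget-by-gadget shows $\Delta(M',M)\le 0$. (4) Check the degree bound (each list has length $\le 4$) and that all $B$- and $C$-lists are sub-lists of two fixed master lists, establishing the $2$-master-list strengthening. (5) Confirm the reduction is polynomial-time.

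The main obstacle I expect is step (2)/(3): making the ``all three literals false $\Rightarrow$ not popular'' and ``satisfying assignment $\Rightarrow$ popular'' directions both work with the very tight degree bound of four. In popular-matching hardness reductions the usual trick is to create a long ``augmenting'' alternating structure through which many agents simultaneously improve; but with lists of length at most four and the cyclic three-dimensional constraint (a blocking/alternating move must cycle $A\to B\to C\to A$), I have limited room to route such an improvement. The delicate part is ensuring the improving matching $M'$ in the unsatisfied-clause case touches \emph{exactly} the right agents so the vote tally is strictly positive, while in the satisfied case no such global $M'$ can be assembled -- i.e. the gadgets must be ``rigid'' enough that improvements cannot be chained across independent clause gadgets. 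I would handle this by giving each gadget a small surplus of ``dummy'' agents whose sole purpose is to absorb or block stray improvements, and by a careful accounting that every beneficial flip in $M'$ is paid for by an equal or larger number of harmed agents unless it is anchored at an unsatisfied clause. Since the statement says the full hardness proof is deferred to the appendix, I would present here only this gadget outline and the two correspondence lemmas, leaving the case analysis of $\Delta(M',M)$ to the appendix.
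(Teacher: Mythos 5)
Your overall architecture coincides with the paper's: a reduction from \textsc{(2,2)-e3-sat} with a two-state variable gadget per variable, a clause gadget that is only ``rescued'' when some literal is true, and master lists placed on the $B$- and $C$-sides to get the $2$-master-list strengthening. However, what you have written is a plan, not a proof, and the gap is exactly at the point you yourself flag as the main obstacle: you never exhibit the gadgets, and you never supply the mechanism by which an unsatisfied clause destroys popularity while a satisfied one does not. Saying that the clause gadget ``forces a deficiency that some alternative matching exploits'' is the entire content of the hardness argument; without a concrete construction and the vote-counting case analysis for both directions (in particular the ``satisfiable $\Rightarrow$ a popular matching exists'' direction, where one must check that improvements cannot be chained across gadgets), nothing can be verified. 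Your fallback of adding ``dummy agents to absorb stray improvements'' is a hope, not an argument.

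The missing idea in the paper's proof is worth naming, because it is what makes the clause gadget work and what you would have had to discover: each clause gadget is a $3\times 3$ sub-instance whose internal preferences are derived from a $3$-master list, i.e.\ a Condorcet-style cycle. By Theorem~\ref{thrm:ml_nopop} such a sub-instance admits no popular matching on its own, so in any popular matching of the whole instance at least one clause agent must escape to its variable gadget; conversely, by Observation~\ref{obs:remove_agent}, once one agent leaves, the remaining $3\times3$-minus-one sub-instance does admit a popular local matching. This ``no popular matching unless one agent is absorbed elsewhere'' device replaces your unspecified deficiency mechanism, and the variable gadget's consistency lemma (both a positive and a negative connector matched to clauses leaves internal agents unmatched and yields a more popular matching) replaces your unspecified two-state rigidity argument. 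Until you construct gadgets with these properties and carry out the counting, the proposal remains an outline with the hard steps unaddressed.
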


In order to state \NP-completeness instead of \NP-hardness, we would have to prove that \textsc{3dpmvi} is polynomial-time solvable. However we can show that this is problem is computationally intractable as well.
\begin{restatable}{thm}{popvnphard}
It is \NP-complete to decide whether a given \textsc{3dpmvi} instance with a matching $M$ admits a matching that is more popular than $M$. This holds even if the preferences are derived from a $1$-master list.
	\label{thrm_np_pop_v_inc}
\end{restatable}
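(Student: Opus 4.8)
The membership in \NP{} is immediate: given an instance $\mathcal I$ of \textsc{3dpmvi} with matching $M$, a matching $M'$ with $\Delta(M',M)\ge 1$ is a polynomial-size certificate, since $\Delta(M',M)=\sum_{x\in V}\vote_x(M',M)$ is computable in polynomial time. So the plan is to establish \NP-hardness by a polynomial-time reduction from a sparse variant of \textsc{3sat}, for instance \textsc{(2,2)-e3-sat}. The target is to build, from a formula $\phi$ with variable set $X$ and clause set $\mathcal C$, an instance $\mathcal I$ together with a matching $M$ so that $M$ is beaten by some matching if and only if $\phi$ is satisfiable, while keeping every preference list short and forcing the lists of one agent class (say $C$) to be sublists of a single fixed master list, which is what the ``$1$-master list'' strengthening requires.

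The construction would use three kinds of gadgets. For each variable $x\in X$ a \emph{variable gadget} consisting of a constant number of agents, with a designated $M$-configuration and exactly two alternative configurations encoding $x=\texttt{true}$ and $x=\texttt{false}$; these are designed to be \emph{vote-neutral}, meaning that moving the whole gadget from its $M$-configuration to either alternative makes as many of its agents improve as worsen. For each occurrence of a literal there is a \emph{link gadget} that forwards the value chosen in the variable gadget to the clauses containing that literal. For each clause $c\in\mathcal C$ a \emph{clause gadget} whose agents are in their worst available $M$-partners, and which admits a re-matching that strictly improves one agent \emph{without worsening anyone} exactly when at least one of its three literals has been set to true by the incident variable gadget; if none of its literals is true, the only available deviation inside the clause gadget forces a strict worsening of some agent. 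One extra ``bonus'' gadget is added that yields a single net $+1$ vote as soon as the matching deviates from $M$ at all. Inter-gadget triples are made infeasible (or strictly worse for some participant) so that any alternative matching $M'$ decomposes gadget by gadget. The master list over $C$ is chosen so that all the local orders required above are mutually consistent with one global order; padding with dummy agents enforces $|A|=|B|=|C|=n$.

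For correctness, if $\phi$ is satisfiable, put every variable gadget into the configuration dictated by a satisfying assignment, activate every clause gadget, and take the bonus; then $\Delta(M',M)\ge 1$, so $M$ is beaten. Conversely, for an arbitrary $M'$ one argues: the bonus contributes at most $1$; each variable gadget contributes at most $0$, and $0$ only in a consistent true/false state; each clause gadget contributes $\le 0$, with $0$ attainable only if one of its literals is true under the induced assignment, and strictly negative if it deviates while unsatisfied. Summing, $\Delta(M',M)\le 1-(\#\text{unsatisfied clauses under the induced assignment})$, so $\Delta(M',M)\ge 1$ forces a consistent assignment satisfying every clause. Hence $\phi$ unsatisfiable implies $\Delta(M',M)\le 0$ for all $M'\neq M$, i.e.\ $M$ is not beaten.

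The main obstacle is the converse direction, i.e.\ controlling \emph{all} alternative matchings at once. Because the set of feasible triples in a 3D-cyclic instance is combinatorially rich, the delicate design points are (a) ruling out rogue triples mixing agents of different gadgets, (b) guaranteeing that a clause gadget cannot ``steal'' a vote via an \emph{inconsistent half-flip} of a variable gadget, and (c) ensuring the bonus gadget cannot be triggered without paying the clause penalties. Achieving a correct vote accounting while keeping lists short and all local orders embeddable in one master list over $C$ is where the real work lies; the remaining parts of the reduction (dummy padding, checking acceptability and cyclicity, the size analysis) are routine.
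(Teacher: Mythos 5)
There is a genuine gap: what you give is a blueprint, not a proof. Every load-bearing object in your reduction --- the ``vote-neutral'' variable gadget, the link gadgets, the clause gadget that can improve one agent without hurting anyone exactly when a literal is true, and above all the ``bonus'' gadget that ``yields a single net $+1$ vote as soon as the matching deviates from $M$ at all'' --- is specified only by the properties you would like it to have, with no agents, no preference lists, and no verification that such gadgets exist in the 3D-cyclic setting while staying consistent with a single master list. The bonus gadget is particularly problematic: a gadget cannot locally ``detect'' that the matching deviates elsewhere, so either it can be activated on its own (and then $M$ is never popular, independently of satisfiability) or its activation must be coupled to the clause/variable gadgets, and that coupling is exactly the part you do not construct. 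Your own accounting also does not close: you claim $\Delta(M',M)\le 1-(\#\text{unsatisfied clauses})$, but by your gadget description an unsatisfied clause gadget that simply stays in its $M$-configuration contributes $0$, not $-1$, so the bonus $+1$ could already beat $M$ without a satisfying assignment unless further (unspecified) constraints tie the bonus to all clauses. You flag points (a)--(c) as ``where the real work lies''; that is precisely the content of the theorem, so the hardness direction is not established.

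For comparison, the paper proves this statement by a different and more economical route: it reduces from \textsc{perfect 3d-cyclic matching} (NP-complete by Garey and Johnson), reusing and extending the construction from the strong-popularity hardness proof. Each original agent is accompanied by two primed/double-primed copies, the verified matching is taken to be $M=\{(a_i,b_i',c_i')\}\cup\{(a_i'',b_i,c_i'')\}$, and a parity trick ($n$ odd, with the $c_i''$ agents' preferences alternating with the parity of $i$) ensures that a matching more popular than $M$ exists if and only if the original instance has a perfect matching; the $1$-master-list property comes for free because the neighbor orderings are chosen consistently with one global order per class. Your NP-membership argument is fine and matches the paper, but to salvage your SAT-based approach you would have to exhibit concrete gadgets and carry out the full case analysis over arbitrary alternative matchings (including triples mixing agents from different gadgets), which is the step currently missing.
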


\subsection{Strong Popularity}

Next we show that it is also \NP-hard to find a strongly popular matching and to verify whether a given matching is strongly popular. For this we reduce from the problem of finding a perfect matching in a 3D-cyclic matching instance without preferences, shown to be \NP-complete by \citet{GJ79}. 
\dprob{\textsc{perfect 3d-cyclic matching}}{Sets $A,B,C$ with cyclic acceptability relations.}{Does a perfect matching exist?}
\begin{restatable}{thm}{strongpopnphard}
It is \NP-hard to determine whether a given \textsc{3dspmi} instance admits a strongly popular matching. This holds even if the preferences are derived from a $2$-master list.
	\label{thrm_strongpop_np}
\end{restatable}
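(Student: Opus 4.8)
The plan is to reduce from \textsc{perfect 3d-cyclic matching}, which is \NP-complete~\cite{GJ79}. Given an instance with agent sets $A,B,C$ and cyclic acceptability relations, I would construct in polynomial time a \textsc{3dspmi} instance $\mathcal I$ on the original agents together with, for every original agent, the agents of a small \emph{forcing gadget}; the preferences of two of the three classes will rank their acceptable neighbours according to single master lists, while the third class realises the (arbitrary) acceptability graph via incomplete lists. The target equivalence is that $\mathcal I$ admits a strongly popular matching if and only if the original instance admits a perfect matching.

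The gadget has to do two jobs simultaneously. First it must be \emph{vote-balanced}: the gadget agents can always be matched among themselves by a fixed internal assignment, so that no competing matching can harvest a vote surplus merely by saturating gadget agents that are left unmatched -- this is exactly what wrecks the naive attempt, in which a perfect matching of the original part would free up linearly many private agents for a competitor to absorb. Second, the link between each original agent and its gadget must force a strongly popular matching to saturate the original part \emph{through original edges}, \ie to induce a perfect matching on $(A,B,C)$, and must guarantee that within the original part no local reshuffle improving a set $S$ of agents is cheaper than making $|S|+1$ agents worse off. With such a gadget in hand, the forward direction runs as follows: from a perfect matching $M^{*}$ of the original instance form the candidate $M^{\circ}$ consisting of $M^{*}$ together with the canonical internal gadget assignments and links, and verify $\Delta(M^{\circ},M')\ge 1$ for every $M'\neq M^{\circ}$ by a case analysis on how $M'$ differs from $M^{\circ}$ (gadget deviations are punished by vote-balance, original-part deviations by the $|S|$-versus-$|S|+1$ inequality). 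If the original instance happens to have several perfect matchings, the master lists are chosen so that exactly one of the corresponding candidates dominates all others, which is consistent with the fact that an instance has at most one strongly popular matching.

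For the converse, assume the original instance has no perfect matching and let $M$ be an arbitrary matching of $\mathcal I$. If $M$ does not match every gadget agent to its canonical internal partner, then switching the affected gadget agents to that assignment (and re-routing links) yields a matching $M'\neq M$ with $\Delta(M',M)\ge 0$; the elementary fact that a strongly popular matching admits no triple of mutually acceptable, currently unmatched agents is the starting point here. Hence a strongly popular matching would have to match every gadget agent canonically, which leaves the original agents to be saturated among themselves, \ie to form a perfect matching of $(A,B,C)$ -- contradicting our assumption. Therefore $\mathcal I$ has no strongly popular matching, the reduction is correct, and since it is clearly polynomial, the theorem follows.

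The main obstacle is the gadget design: strong popularity is a very demanding property -- the candidate must beat \emph{every} other matching, not merely the stable or the maximum-size ones -- so the gadget must simultaneously pin down all non-core behaviour, remain vote-balanced, respect the two-master-list restriction, and have polynomial size, while also ruling out the mixed reshuffles that combine gadget edges with original edges. Establishing those properties is the technical heart of the proof; the polynomiality of the reduction and the appeal to the \NP-completeness of \textsc{perfect 3d-cyclic matching} are routine.
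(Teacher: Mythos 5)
There is a genuine gap: your proposal never exhibits the reduction. Everything hinges on a ``forcing gadget'' with a wish-list of properties (vote-balance, forcing the original part to be saturated through original triples, compatibility with a $2$-master list, and---when the source instance has several perfect matchings---making exactly one induced candidate strictly beat every other matching), but no such gadget is constructed and none of these properties is verified; you yourself call this ``the technical heart of the proof,'' and it is precisely what is missing. The last requirement is especially problematic for your chosen direction of the equivalence (strongly popular matching exists iff a perfect matching exists): strong popularity demands a strict win in \emph{every} head-to-head election, so if the source instance has two perfect matchings that differ only on a few triples, the two induced candidates are nearly tied, and you would have to break this tie in favour of one of them against all matchings simultaneously, over an arbitrary acceptability graph and using only master lists. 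No argument is given that this can be done. Your converse is also unsubstantiated: from ``every gadget agent is matched canonically'' you jump to ``the original agents must be saturated among themselves''; strong popularity only forbids being tied or beaten (in particular, leaving a mutually acceptable triple entirely unmatched), and with incomplete lists a strongly popular matching may perfectly well leave many original agents unmatched, so the claimed forcing does not follow.

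It is worth seeing how the paper sidesteps exactly this obstacle by running the equivalence in the \emph{opposite} direction. It duplicates each class ($A_0'$, $B_0'$, $C_0'$), makes the private partner $b_i'$ the top choice of $a_i$, gives $b_i'$ the list $c_i'\succ c_{i+1}'$ and $c_i'$ the list $a_i\succ a_{i-1}'$, and proves that the canonical matching $M=\{(a_i,b_i',c_i')\mid a_i\in A_0\}$ is strongly popular if and only if the source instance has \emph{no} perfect matching: when a perfect matching $M'_0$ exists, the cyclically shifted rival $M'_0\cup\{(a_i',b_i',c_{i+1}')\mid i\}$ ties with $M$ (per index, three agents prefer each side), so no strongly popular matching exists at all; when none exists, an index-by-index vote count shows $M$ strictly beats every other matching. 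The existence of a perfect matching is thus used to \emph{destroy} strong popularity rather than to create it, so one never needs to engineer a unique dominant matching---the step your plan leaves open. To salvage your direction you would have to actually construct the gadget and prove the dominance claim; as it stands, the proposal is a plan, not a proof.
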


A slightly modified version of the proof implies that \textsc{3dspmvi} is also computationally intractable. 
\begin{restatable}{thm}{strongpopvnphard}
    It is \NP-complete to decide whether a given \textsc{3dspmvi} instance admits a matching $M'$ such that $M$ is not more popular than~$M'$. This holds even if the preferences are derived from a $2$-master list.
    \label{th:3dspmvi}
\end{restatable}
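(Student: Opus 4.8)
First note that the problem in the statement is exactly the complement of strong-popularity verification: by definition $M$ is strongly popular iff $\Delta(M',M)\le -1$ for every matching $M'\neq M$, so a matching $M'\neq M$ with $\Delta(M',M)\ge 0$ exists iff $M$ is \emph{not} strongly popular. Membership in \NP\ is then immediate --- such an $M'$ is a certificate of polynomial size (at most $n$ triples), and checking $M'\neq M$ together with $\Delta(M',M)=\sum_{x\in V}\vote_x(M',M)\ge 0$ takes linear time.

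For the hardness part, the plan is to reuse the reduction from \textsc{perfect 3d-cyclic matching} behind the proof of Theorem~\ref{thrm_strongpop_np}. That reduction already turns a source instance on agent sets $A,B,C$ into a \textsc{3dspmi} instance $\mathcal I$ whose preferences are derived from a $2$-master list, and it is organized around a canonical ``default'' matching $M$. The only change is to output the pair $(\mathcal I,M)$ and, if the gadget of Theorem~\ref{thrm_strongpop_np} does not already guarantee it, to add a small amount of padding so that it is $M$ --- rather than the embedding of a perfect matching --- whose strong popularity is in question. One then re-establishes the two directions: (i) if the source has a perfect matching, embedding it on the original agents of $\mathcal I$ yields some $M'\neq M$ in which the agents that improve over $M$ at least balance those that deteriorate, hence $\Delta(M',M)\ge 0$ and $M$ is not strongly popular; and (ii) if the source has no perfect matching, then $\Delta(M',M)\le -1$ for every $M'\neq M$. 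Together with \NP-membership and the \NP-completeness of \textsc{perfect 3d-cyclic matching}~\cite{GJ79}, this gives \NP-completeness, and the $2$-master-list property is inherited from the construction of Theorem~\ref{thrm_strongpop_np}.

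The main obstacle, as is typical for reductions of this kind, is direction (ii): excluding ``spurious'' competitors of $M$. One has to argue that any matching $M'$ that avoids a strict net loss of votes against $M$ must re-route the original agents $A\cup B\cup C$ along $n$ pairwise disjoint acceptable triples --- that is, along a perfect matching of the source --- whereas every partial re-routing, and every deviation that misuses the padding agents, makes strictly more agents worse off than better off. Carrying out this vote count under the $2$-master-list restriction, which sharply limits how asymmetric the preference lists may be, is the delicate step: one needs the improvement/deterioration tally to come out exactly balanced only on genuine perfect matchings and strictly negative otherwise. A convenient way to organize it is to isolate the set of agents that can be strictly better off than in $M$, describe the ``improvement structure'' connecting them, and show that it admits a system of disjoint improving triples of total size $n$ precisely when the source admits a perfect matching; in every other case each improving agent is charged to a distinct deteriorating agent.
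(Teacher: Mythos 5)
Your approach is essentially the paper's: there, Theorem~\ref{th:3dspmvi} is obtained in one line by reusing the construction of Theorem~\ref{thrm_strongpop_np} and outputting the pair $(\mathcal I, M)$ with the canonical matching $M=\{(a_i,b_i',c_i')\mid a_i\in A_0\}$, with \NP-membership certified by any matching that $M$ is not more popular than; no extra padding is needed, and your ``delicate'' direction (ii) is exactly the vote-counting case analysis already carried out in that proof. One small precision for direction (i): the tying matching is not merely the embedded perfect matching on the original agents (that alone gives $\Delta\le -n$, since all $a_i$, $b_i'$, $c_i'$ then prefer $M$), but $M'=M_0'\cup\{(a_i',b_i',c_{i+1}')\mid i\in[n]\}$, which matches the primed copies as well and yields $\Delta(M',M)=0$.
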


\subsection{$A \cup B$-Popularity}

Finally we turn to the application-motivated variant of our problem and show that computing a matching that is $A \cup B$-popular, \ie it does not lose any head-to-head election where only agents in $A \cup B$ can vote, is \NP-hard as well. We reduce from the problem of finding a popular matching in a bipartite graph with one side having strict preferences and the other side either having a tie or strict preferences, which was shown to be \NP-complete by \citet{CHK17}. 
\dprob{\textsc{popular matching with one-sided ties}}{A bipartite graph $G = (U \cup W, E)$, for each $u \in U$ a strict preference list over its neighbors in $W$, for each $w \in W$ either a strict preference list or a preference list containing a single tie over its neighbors in~$U$.}{Does $G$ admit a popular matching?}
\begin{restatable}{thm}{abpop}
It is \NP-hard to decide whether a \textsc{3dpmi} instance admits an $A \cup B$-popular matching.
\label{thm:aubpop}
\end{restatable}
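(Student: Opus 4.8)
The plan is to reduce from \textsc{popular matching with one-sided ties} on a bipartite graph $G = (U \cup W, E)$, which is \NP-complete by \citet{CHK17}. Given such an instance, I would build a \textsc{3dpmi} instance as follows: set $A \coloneqq U$, $B \coloneqq W$, and let $C$ be a set of $|U|$ ``dummy'' agents, one per vertex of $U$. The preferences of $A$ over $B$ are copied verbatim from the preferences of $U$ over $W$ in $G$. Each agent $b \in B = W$ is given a preference list over $C$; here the idea is to encode $w$'s preference over its $U$-neighbors indirectly. A clean way to do this, mirroring the $A\cup B$-popularity definition, is to make every agent in $C$ have only a single acceptable partner in $A$ (so $c_u$ finds only $u$ acceptable), which forces any matched triple containing $u$ to be of the form $(u, b, c_u)$. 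Then a matching in the 3D instance that matches $u$ corresponds exactly to an edge $(u,w)$ of $G$, and the triple is $(u, w, c_u)$. Since $c$-agents have no preferences that matter (or are treated as objects), and only $A\cup B$ votes count, the $A\cup B$-popularity of the 3D matching should coincide with the popularity of the corresponding matching in $G$.

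The first step is to make this correspondence precise: I would show that matchings $M$ of the 3D instance that are ``sensible'' (match each agent in $A$ to its own dummy in $C$) are in bijection with matchings of $G$, and that under this bijection $\vote_u$ and $\vote_w$ agree with the votes in $G$. The second step handles the subtlety that a 3D matching need not respect the bijection — but since each $c_u$ is only acceptable to $u$, the only triple $u$ can ever be in is $(u, M(u), c_u)$, so in fact every 3D matching automatically lies in the image of the bijection (modulo which agents are unmatched), and no extra argument is needed. The third step is the equivalence of the optimality notions: $M'$ is $A\cup B$-more popular than $M$ in the 3D instance iff the corresponding $G$-matching $N'$ is more popular than $N$, because the vote of every agent in $A \cup B = U \cup W$ is literally the same on both sides. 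Hence $G$ has a popular matching iff the 3D instance has an $A\cup B$-popular matching.

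One design choice to get right is how $w \in B$ ranks $C$: I actually want $B$'s preferences to be over $C$, but the information we need to encode is $w$'s ranking of $U$. With the ``$c_u$ acceptable only to $u$'' gadget, once we know $M(w) = c_u$ we know $w$'s partner ``is'' $u$, so it suffices to let $w$ rank $c_u$ in exactly the order in which $w$ ranks $u$ in $G$ (with ties if $w$'s list in $G$ has a tie, which is allowed here since the 3D model's lists over the next class can be strict — wait, the 3D model demands strict lists, so I must be slightly careful). To handle the single tie on the $W$-side without leaving the strict-preference model, I would break the tie arbitrarily but then argue it does not matter: because $c$-agents are forced partners, swapping $w$ between the two tied options never changes any vote except possibly $w$'s own, and the reduction can be set up (e.g. by a standard ``two parallel copies'' trick, or by observing $w$ is indifferent) so that an arbitrary tie-break is without loss of generality. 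The main obstacle is precisely this: making the reduction respect the strict-list requirement of the 3D model while still faithfully simulating a $W$-side tie, and verifying that an arbitrary tie-break cannot create or destroy an $A\cup B$-more-popular matching. Everything else — the bijection and the vote-by-vote equality — is routine once the gadget is fixed.
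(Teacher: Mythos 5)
Your overall plan coincides with the paper's: reduce from \textsc{popular matching with one-sided ties}~\cite{CHK17}, copy the $U$-side preferences onto $A$, and use $C$-agents whose only acceptable $A$-partner is a fixed $u$ so that every triple is forced to be $(u,w,c_u)$, 3D matchings correspond bijectively to matchings of $G$, and the $A\cup B$ votes coincide with the votes in $G$. For the agents of $W$ with strict lists this is sound and essentially what the paper does (the paper uses private copies $c_{ij}$, one per pair $(w_i,u_j)$, but your shared dummy $c_u$ serves the same purpose). The genuine gap is exactly the point you flagged and left open: the tied $W$-agents. Breaking the tie arbitrarily is not ``without loss of generality''---it changes $w$'s votes, and in fact it destroys the reduction. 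Once both sides of $G$ are strict, $G$ always admits a stable matching, and stable matchings are popular in the two-sided strict model (G\"ardenfors), so under your vote-by-vote equivalence the constructed \textsc{3dpmi} instance would \emph{always} have an $A\cup B$-popular matching, regardless of whether the original instance with ties does; the hardness of the source problem lives entirely in the ties. The ``two parallel copies'' fallback is not worked out, and ``$w$ is indifferent'' is simply false after the tie-break.

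The paper's gadget for a tied $w_i$ is the missing idea: give $b_i$ a preference list consisting of a \emph{single} agent $c_i$, and let $c_i$'s acceptable $A$-agents be exactly $w_i$'s neighborhood (the order of $c_i$'s list is irrelevant, since under $A\cup B$-popularity only the acceptability of $C$-agents matters, which is also how the tie is absorbed without violating the strict-list requirement). Then $b_i$ can only be matched through $c_i$ to one of $w_i$'s neighbors, and because $b_i$'s list has length one, its vote depends only on whether it is matched or unmatched---which is precisely the semantics of a single tie. With this gadget your bijection and vote-by-vote argument go through; without it, the reduction as proposed is incorrect.
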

Interestingly enough the problem becomes easy with complete lists.
\begin{restatable}{thm}{abcomp}
Both verifying $A \cup B$-popularity and computing an $A \cup B$-popular matching in a \textsc{3dpm} instance $\mathcal{I}$ can be done in linear time.
\label{thm:aubpol}
\end{restatable}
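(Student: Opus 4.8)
The plan is to decouple the problem into two independent one‑sided bipartite popular‑matching problems on complete bipartite graphs and to characterise their popular matchings explicitly. First I would observe that any $A\cup B$-popular matching must be perfect: if $a\in A$, $b\in B$, $c\in C$ are simultaneously unmatched, then adding the triple $(a,b,c)$ strictly improves $a$ and $b$ and leaves every other agent's partner unchanged, so the resulting matching is $A\cup B$-more popular. Since lists are complete, a perfect matching $M$ of the $3$d instance is just an arbitrary pair $(M_{AB},M_{BC})$ of perfect matchings of the complete bipartite graphs on $A\cup B$ and on $B\cup C$; each $a\in A$ votes solely according to $M_{AB}$ (its partner lies in $B$) and each $b\in B$ solely according to $M_{BC}$. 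Hence $\sum_{x\in A\cup B}\vote_x(M',M)$ splits into an $A$-part depending only on $M'_{AB}$ and a $B$-part depending only on $M'_{BC}$, and (noting that a competing $M'$ may again be taken perfect, since extending $M'$ over unmatched agents only raises their votes) $M$ is $A\cup B$-popular iff $M_{AB}$ is popular for the votes of $A$ among all perfect $A$--$B$ matchings and $M_{BC}$ is popular for the votes of $B$ among all perfect $B$--$C$ matchings. So it suffices to verify and to construct, in linear time, a popular perfect matching in $K_{n,n}$ where one side has strict preferences.

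For the characterisation, let $M$ be a perfect matching between agents (with strict lists) and houses, and let $Z$ be the set of houses that are nobody's first choice. Decomposing $M\triangle M'$ of a more popular $M'$ into alternating cycles shows $M$ is popular iff no $M$-alternating cycle has strictly more agents improving than worsening; encoding this as a complete digraph on the houses with edge weights in $\{+1,-1\}$, popularity is equivalent to the absence of a positive‑weight directed cycle, i.e.\ to the existence of a potential $\phi$ on the houses. The $-1$-edges force $\max\phi-\min\phi\le 1$, so $\phi$ may be taken $\{0,1\}$-valued, and a short case analysis then forces $\phi$ to be the indicator of $Z$; feasibility of $\phi$ thereby boils down to the clean condition that \emph{every agent is matched either to its first choice or to its most preferred house in $Z$}. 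This condition also has a self‑contained proof of sufficiency: along any alternating cycle an improving agent must occupy a $Z$-house while its new house lies outside $Z$, and counting the transitions between $Z$ and its complement around the cycle bounds the number of improvers by half the cycle's length. The condition is tested in $O(n^2)$ time, which is linear in the input, giving the verification algorithm (and, by the reduction above, verification of $A\cup B$-popularity).

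For the construction, the characterisation forces each agent $a$ to be matched only to $\mathrm{top}(a)$ or to $z^\ast(a)$, its favourite house in $Z$ (only to $\mathrm{top}(a)$ when $Z=\emptyset$). I would form the multigraph $G'$ on the house set with one edge $\{\mathrm{top}(a),z^\ast(a)\}$ per agent; a popular perfect matching is exactly an orientation of $G'$ with all in‑degrees equal to $1$, which exists iff every connected component of $G'$ has as many edges as vertices (equivalently, no component is a tree), and is produced in linear time by orienting the unique cycle of each component and directing the hanging trees away from it. Running this on the $A$--$B$ side and on the $B$--$C$ side and combining the two outputs — or reporting infeasibility if either side has no popular matching — then yields an $A\cup B$-popular matching of the $3$d instance, or a proof that none exists, in linear time.

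The technical heart is the characterisation step, and specifically the argument that the potential can be taken $\{0,1\}$-valued and must coincide with the indicator of $Z$: this is exactly what collapses an a priori intractable search over all matchings to the trivial ``first choice or favourite $Z$-house'' test. Once that is in place, the reduction, the verification routine, and the orientation‑based construction are routine bookkeeping; the only further care needed elsewhere is the reduction to perfect matchings, both for the candidate $M$ and for its competitors.
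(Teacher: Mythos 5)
Your first step---splitting the complete cyclic instance into the two one-sided instances $(A,B,(\succ_a)_{a\in A})$ and $(B,C,(\succ_b)_{b\in B})$, observing that the votes of $A$ depend only on the $A$--$B$ part and the votes of $B$ only on the $B$--$C$ part, and that the two parts of a competitor can be chosen independently---is exactly the paper's reduction. Where you diverge is in how the one-sided subproblem is handled: the paper simply cites the linear-time algorithms of \citet{AIKM07} for popular matchings in house allocation, whereas you re-derive their characterization (every agent gets its first choice or its best house in $Z$, the set of houses that are nobody's first choice) specialized to complete lists and perfect matchings, together with the orientation-based construction on the multigraph whose edges are the pairs $\{f(a),s(a)\}$. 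This buys self-containedness and, usefully, makes explicit two points the paper glosses over: that with complete lists an $A\cup B$-popular matching must be perfect, and that competing matchings may be assumed perfect; the cost is reproving a known result. One imprecision in your necessity argument: the feasible potential is \emph{not} in general forced to be the indicator of $Z$ (for instance, when all first choices are distinct and every agent receives it, several potentials are feasible; and with your sign convention the forced values are $1$ on the relevant $f$-houses rather than on $Z$). What your argument actually needs, and what does follow from the spread-$\le 1$ observation, is merely that \emph{some} $\{0,1\}$-valued potential exists; a short case analysis on an agent with $M(a)\notin\{f(a),s(a)\}$ then yields a violated constraint (a potential difference of $2$), so the characterization, and with it your verification and construction routines, stand.
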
%
\begin{proof}
From $\mathcal{I}$ we construct two house allocation instances with one-sided preferences, $\mathcal{I}_A := (A, B, (\succ_a)_{a \in A})$ and $\mathcal{I}_B := (B, C, (\succ_b)_{b \in B})$. We will show that $\mathcal{I}$ admits a popular matching if and only if both $\mathcal{I}_A$ and $\mathcal{I}_B$ admit a popular matching.

First assume that $\mathcal{I}$ admits a popular matching $M$. Using this we now construct the two matchings, $M_A \coloneqq \{(a, M(a) \mid a\in A\}$ in $\mathcal{I}_A$ and $M_B \coloneqq \{(b, M(b) \mid b\in B\}$ in $\mathcal{I}_B$. Without loss of generality assume that $M_A$ is not popular in $\mathcal{I}_A$ and let $M_A'$ be the more popular matching. It is easy to see that the matching $\{a, M_A'(a), M(M_A'(a)) \mid a\in A\}$ is also more popular than $M$ in $\mathcal{I}$. 

If $\mathcal{I}_A$ and $\mathcal{I}_B$ admit popular matchings $M_A$ and $M_B$, respectively, then the matching $\{a, M_A(a) M_B(M_A(a)) \mid a \in A\}$ is clearly popular in~$\mathcal{I}$.

This immediately yields a linear time algorithm for finding an $A \cup B$-popular matching and verifying whether a matching is $A \cup B$-popular matching, since popular matchings in house allocation instances can be found and verified in linear time as shown by \citet{AIKM07}.
\end{proof}

\section{Master Lists}
\label{sec:master}

Now we turn to studying the problem of computing a popular matching in instances with preferences derived from master lists. Examples of real-life applications of master lists occur in  resident matching programs~\cite{BIS11}, dormitory room assignments~\cite{PPR08}, cooperative download applications such as BitTorrent~\cite{ALMO08}, and 3-sided networking services~\cite{CJ13}. 
Even though the presence of a master list usually simplifies stable matching problems and warrants that a solution exists and it is easy to find~\cite{IMS08,OMa07}%
, here we show that for 3-dimensional popular matchings, instances with master lists tend to admit no popular matching at all. This observation is aligned with what has already been shown by the Condorcet paradox, possibly the first example for the non-existence of a weak majority winner.%

\subsection{3-Master List}
First we show that an instance derived from a $3$-master list has no popular matching if there are at least 3 agents per class.

\begin{restatable}{thm}{mlnopop}
A \textsc{3dpm} instance derived from a 3-master list has no popular matching if $n \geq 3$. 
	\label{thrm:ml_nopop}
\end{restatable}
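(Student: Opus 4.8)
The plan is to prove the stronger statement that in such an instance \emph{every} matching is beaten by a more popular one, so none is popular. Since each class's preferences come from a (complete) master list, after relabelling every $a\in A$ ranks $b_0\succ\cdots\succ b_{n-1}$, every $b\in B$ ranks $c_0\succ\cdots\succ c_{n-1}$, and every $c\in C$ ranks $a_0\succ\cdots\succ a_{n-1}$. A popular matching must be perfect: if a matching leaves an agent of every class unmatched, adding one triple on three such agents improves three agents and harms none. So fix a perfect matching $M$ and write its triples as $t_i=(a_i,b_{p_i},c_{q_i})$ for $i=0,\dots,n-1$, where $p,q$ are permutations of $\{0,\dots,n-1\}$; in $t_i$ the ``cost'' (the partner's rank in the relevant master list, smaller being better) of $a_i$, of $b_{p_i}$, and of $c_{q_i}$ is $p_i$, $q_i$, and $i$ respectively.

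The main tool is a local rotation: pick any three triples of $M$ and move the three $C$-agents cyclically one step among them while keeping the $A$- and $B$-agents fixed (together with the opposite rotation), and similarly with the three $B$-agents, or the three $A$-agents. Each such move is a valid perfect matching (lists are complete) that changes the partner of only six agents, and a routine vote count gives $\Delta(\text{new},M)=\pm(\theta_C-\theta_A)$ for the $C$-rotation (and the cyclic analogues $\pm(\theta_A-\theta_B)$, $\pm(\theta_B-\theta_C)$ for the other two kinds), where, for the three chosen triples listed in a fixed order, $\theta_A,\theta_B,\theta_C\in\{-1,+1\}$ record the cyclic orientation of their $A$-, $B$- and $C$-indices. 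Hence if for some three triples of $M$ the values $\theta_A,\theta_B,\theta_C$ are not all equal, one of these rotated matchings beats $M$ by exactly $2$, so $M$ is not popular.

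So suppose $M$ is popular; then $\theta_A=\theta_B=\theta_C$ for every triple of triples. Ordering the triples by $A$-index makes $\theta_A=-1$, hence $\theta_C=-1$ throughout, which says precisely that every length-$3$ subsequence of $(q_0,\dots,q_{n-1})$ has an even number of inversions; likewise for $(p_0,\dots,p_{n-1})$. The crux — and the step I expect to be the main obstacle — is the purely combinatorial claim that \emph{a permutation of $\{0,\dots,n-1\}$ whose every length-$3$ subsequence has an even number of inversions must be a cyclic shift $i\mapsto(i+c)\bmod n$}. I would prove it directly: if $0$ sits at position $m$, the subsequence (an entry before position $m$, then $0$, then an entry after $m$) forces all entries before $m$ to exceed all entries after $m$, so those two blocks hold the $m$ largest and the $n-1-m$ smallest non-zero values; and a descent inside either block, followed by an entry of the other block (or by the $0$ if that block is empty), would give a strictly decreasing length-$3$ subsequence with three inversions, so each block is increasing — leaving exactly the cyclic shift $i\mapsto i+(n-m)$. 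Granting this, $M$ is a \emph{shift matching}: for some $s,t$ its triples are $(a_i,\,b_{(i+s)\bmod n},\,c_{(i+s+t)\bmod n})$.

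It remains to beat every shift matching. Comparing the shift matching with parameters $(s,t)$ to the one with parameters $(s-1,t-1)$: every $A$-agent's partner moves up one rank in its list except the unique agent whose partner was $b_0$ (who drops to the last rank), likewise every $B$-agent's, while every $C$-agent's partner moves down two ranks except the two whose partners wrap around to the top. Counting votes yields $\Delta=(n-2)+(n-2)+(4-n)=n\ge 3$, so the new matching is more popular. Hence no perfect matching is popular, and with the reduction above, no matching is popular at all. The vote computations (for the rotation move and for the shift-matching comparison) are routine; the only real ingredient is the characterization of the rotation-unbeatable matchings as precisely the shift matchings.
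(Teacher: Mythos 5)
Your proposal is correct---the rotation vote counts, the permutation lemma, and the final comparison $\Delta=(n-2)+(n-2)+(4-n)=n$ all check out---but it takes a genuinely different and longer route than the paper. The paper's proof is entirely local: it fixes any three disjoint triples of a maximal matching, orders them by the $C$-side master list, and distinguishes only two cases according to whether the cyclic orientation of the three $B$-partners opposes or agrees with that of the $A$-agents. In the opposing case it rotates the $A$-agents (your situation $\theta_B\neq\theta_A$, giving $+2$); in the agreeing case---exactly the case where all your single-class rotations yield $\Delta=0$---it uses a combined rearrangement, rotating the $B$-agents one way and the $C$-agents the other, which makes two $A$-agents, two $C$-agents and at least one $B$-agent better off against at most four losers, \emph{regardless} of the $C$-orientation. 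Hence the paper never needs any global structure. Because your toolbox contains only single-class rotations, you are forced to characterize the rotation-proof matchings globally as double cyclic shifts via the lemma that a permutation all of whose $3$-element patterns are cyclically increasing is a cyclic shift, and then to beat shift matchings by a shift-versus-shift comparison; that last step is essentially the same ``shift $B$ and $C$ up by one'' move the paper uses for the $2$-master-list result (Theorem~\ref{thm:ml_nonex}), specialized to shift matchings where the count improves to $n$. What each approach buys: the paper's argument is much shorter and needs only three triples; yours yields extra structural insight (the only matchings surviving all local single-class rotations are precisely the double cyclic shifts) at the cost of the combinatorial lemma. One small slip in your lemma sketch: for a descent $y\succ y'$ inside the block \emph{after} the position of $0$, the witnessing third element must be chosen \emph{before} the descent (an entry of the earlier block, or $0$ itself if that block is empty), and in the latter sub-case the forbidden pattern has one inversion rather than three; the parity argument still applies, so this is a cosmetic fix, not a gap.
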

\begin{proof}
	Let $M$ be a maximal matching (otherwise the matching is trivially not popular) and let $ (a_i,b_i,c_i),(a_j,b_j,c_j), (a_k,b_k,c_k)\in M$ be three disjoint triples. 
	Without loss of generality we can assume that $a_i \succ_c a_j \succ_c a_k$.
	We will now distinguish two cases. First assume that one of 
	\begin{itemize}
	    \item $b_j \prec_a b_k \prec_a b_i$;
	    \item $b_k \prec_a b_i \prec_a b_j$;
	    \item $b_i \prec_a b_j \prec_a b_k$
	\end{itemize}	
	holds, \ie the ranking of the three agents in $B$ is `reversed' compared to the ranking of the agents in $A$ they are matched to. Then the matching $\Mp$ resulting from replacing the triples  $\lbrace (a_i,b_i,c_i),(a_j,b_j,c_j), (a_k,b_k,c_k) \rbrace$ by the triples $\lbrace (a_k,b_i,c_i),(a_i,b_j,c_j), (a_j,b_k,c_k)\rbrace$ in $M$ is more popular, since two of $a_i, a_j, a_k$  (as can be seen by the two $\prec_a$) and $c_j, c_k$ prefer $\Mp$, while only two agents are against $\Mp$. 	
	
    For the second case, we can assume that one of 
	\begin{itemize}
	    \item $b_j \succ_a b_k \succ_a b_i$;
	    \item $b_k \succ_a b_i \succ_a b_j$;
	    \item $b_i \succ_a b_j \succ_a b_k$
	\end{itemize} holds, \ie the ranking of the agents in $B$ is cyclically shifted from the ranking of the agents in $A$. Now we construct matching $\Mp$ by replacing the triples $\lbrace (a_i,b_i,c_i),(a_j,b_j,c_j), (a_k,b_k,c_k) \rbrace$ by the triples $\lbrace (a_i,b_k,c_j),(a_j,b_i,c_k), (a_k,b_j,c_i)\rbrace$ in $M$.
	Since $b$ is sorted, two agents in $A$, $c_j$ and $c_k$, and the agent in $b$ who was previously matched to the worst of the three agents in $C$ prefer their partner in $\Mp$ to their partner in~$M$. Thus $\Mp$ is more popular than $M$, and therefore no popular matching exists in this instance. 
\qed \end{proof}

For the sake of completeness, we remark that for $n \leq 2$, all perfect matchings in a \textsc{3dpm} instance derived from a 3-master list are trivially popular. 

Interestingly, \citet{EO2018} were able to show that instances derived from a $3$-master list have exponentially many stable matchings, so Theorem~\ref{thrm:ml_nopop} 
shows a stark contrast between stability and popularity in three-dimensional cyclic matching.

\subsection{2-Master List}
In the spirit of Theorem~\ref{thrm:ml_nopop}, we can also show that even if only the preferences in $A$ and $B$ are derived from a master list, no popular matching exists if there are more than four agents in each of the three classes. 
\begin{restatable}{thm}{mlnoex}
In an instance of \textsc{3dpm} derived from a $2$-master list, no popular matching exists if $n \ge 5$.
	\label{thm:ml_nonex}
\end{restatable}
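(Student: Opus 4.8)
The plan is to show that \emph{every} matching $M$ is beaten by an explicit ``one--notch improvement'' of $M$, as soon as $n\ge 5$; the non‑perfect case is disposed of as in the proof of Theorem~\ref{thrm:ml_nopop}. So first I would note: if $M$ is not perfect then, since each triple consumes one agent from each of $A,B,C$, there are unmatched agents $a\in A$, $b\in B$, $c\in C$, and $M\cup\{(a,b,c)\}$ is legal because lists are complete; it leaves every other agent's partner unchanged and makes $a,b,c$ strictly better off, so $\Delta(M\cup\{(a,b,c)\},M)=3$ and $M$ is not popular. Hence I may assume $M$ is perfect.

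Let $b^{(1)}\succ\cdots\succ b^{(n)}$ be the master list of $B$ (the common preference list of every agent in $A$) and $c^{(1)}\succ\cdots\succ c^{(n)}$ the master list of $C$. Write $\mathrm{up}_B$ for the map on $B$ sending $b^{(i)}$ to $b^{(i-1)}$ for $i\ge 2$ and $b^{(1)}$ to $b^{(n)}$ (``move one step up the master list, wrapping the top to the bottom''), and define $\mathrm{up}_C$ analogously on $C$. I would then define $M'$ to be the matching obtained from $M$ by simultaneously improving every $A$-agent's $B$-partner and every $B$-agent's $C$-partner by one notch: $M'(a):=\mathrm{up}_B(M(a))$ for all $a\in A$ and $M'(b):=\mathrm{up}_C(M(b))$ for all $b\in B$. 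Since $\mathrm{up}_B,\mathrm{up}_C$ are bijections, $a\mapsto M'(a)$ is a bijection $A\to B$ and $b\mapsto M'(b)$ is a bijection $B\to C$; the map $C\to A$ forced by the triples $\{(a,M'(a),M'(M'(a))):a\in A\}$ is then again a bijection, and every such triple is legal because lists are complete, so $M'$ is a well‑defined perfect matching.

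The core of the argument is the estimate $\Delta(M',M)\ge n-4$, which I would prove class by class. For $a\in A$ we have $M'(a)=\mathrm{up}_B(M(a))\succ_a M(a)$ unless $M(a)=b^{(1)}$, in which case $M'(a)=b^{(n)}\prec_a M(a)$; exactly one $A$-agent is matched to $b^{(1)}$, so the agents of $A$ contribute $(n-1)-1=n-2$. By the identical argument with $L_C$ in place of $L_B$, the agents of $B$ contribute $n-2$. The agents of $C$ have their $A$-partners permuted in an uncontrolled way, but each contributes at least $-1$, so together they contribute at least $-n$. Therefore $\Delta(M',M)\ge(n-2)+(n-2)-n=n-4\ge 1$ whenever $n\ge 5$, so $M'$ is more popular than $M$. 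As $M$ was an arbitrary (perfect) matching, no popular matching exists, proving the theorem.

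The only point that needs care — and the reason the threshold is $n\ge 5$ rather than $n\ge 3$ as in Theorem~\ref{thrm:ml_nopop} — is precisely that the $C$-agents' preferences are \emph{not} governed by a master list, so their $n$ votes can only be bounded by the trivial $-n$; the $n-2$ gained from $A$ together with the $n-2$ gained from $B$ must absorb this loss with a margin of at least one, which forces $n\ge 5$. (For $n\le 4$ this double rotation need not dominate, and one can in fact build $2$-master-list instances with small $n$ that do admit a popular matching, e.g.\ when every $C$-agent is matched to its top choice in $A$; so the bound is essentially the best obtainable by this method.)
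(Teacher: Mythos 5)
Your proof is correct and follows essentially the same route as the paper: cyclically shift every $A$-agent's $B$-partner and every $B$-agent's $C$-partner one notch up the respective master lists, gain $n-1$ votes (minus one loss) in each of $A$ and $B$, concede at most $n$ votes from $C$, and conclude $\Delta(M',M)\ge n-4\ge 1$ for $n\ge 5$. Your explicit treatment of the non-perfect case and of the well-definedness of the shifted matching only makes explicit what the paper leaves implicit.
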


\begin{proof}
	Without loss of generality we can assume that classes $A$ and $B$ each have a master list, and consider a matching $M$. Let the rankings for $B$ and $C$ be $b_1 \succ_a \dots \succ_a b_n$ and $c_1 \succ_b \dots \succ_b c_n$, respectively. For any $\gamma \in \lbrace a,b,c\rbrace$ let $\gamma_i \oplus 1 = \gamma_{(i - 1 \mod n)}$. Intuitively, the $\oplus$ operation takes one step up on the list of agents in a class, and if it is applied to the first agent in the class, then it jumps to the last agent. Now we compare the matching that consists of triples in the form
	$(a_i, M(a_i)\oplus 1,M(M(a_i)\oplus 1) \oplus 1 )$, \ie we cyclically shift up the agents in $B$ and $C$. 
	In this operation, all agents in $A$ except for the agent matched to $b_1$, and all agents in $B$ except for the agent matched to $c_1$ improve. Thus at least $2n-2$ agents improve and at most $n+2$ agents receive a worse partner than in~$M$. Therefore if $n \geq 5$, then $M$ was not popular.
\qed \end{proof}

For the sake of completeness we elaborate on the case of instances with $n \leq 4$. Firstly, for $n \leq 2$, it is easy to see that there is at least one popular matching in each \textsc{3dpm} instance derived from a 2-master list. Instances with $n=3$ and $n=4$ can be yes- and no-instances as well. Since the input size is constant, even iterating through all matchings and checking each of them for popularity delivers a polynomial-time algorithm to decide whether a given instance admits a popular matching.

\subsection{1-Master List}
A result analogous to Theorems~\ref{thrm:ml_nopop} and~\ref{thm:ml_nonex} is unlikely to exist if only one agent class is equipped with a master list. Instead, we give a characterization for strongly popular matchings in instances derived from a $1$-master list with complete lists. This characterization also immediately gives us a linear time algorithm to find and verify a strongly popular matching in these instances. The analogous questions for popularity are discussed as open problems in Section~\ref{sec:openp}.

\begin{restatable}{thm}{strongpop1ml}
In an \textsc{3dspm} instance derived from a $1$-master list, a matching is strongly popular if and only if all agents without a master list are matched to their top choice.
	\label{thrm:strong_pop_1ml}
\end{restatable}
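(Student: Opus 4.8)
The plan is to use the cyclic symmetry to assume that the master list is carried by class $C$, so that the agents without a master list are exactly those in $A\cup B$, and to use throughout that the instance has complete lists, so every triple of $A\times B\times C$ is legal and every perfect matching pairs $C$ bijectively with $A$. As a preliminary, note that a strongly popular matching must be perfect (otherwise one may add a triple consisting of one currently unmatched agent from each class --- these exist since all three classes have equally many matched agents --- and three agents strictly improve), and that when checking strong popularity of a perfect $M$ it suffices to compare $M$ against perfect matchings only, since completing an imperfect $M'$ with such triples only weakly raises every vote of $M'$ against $M$.

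For the ``only if'' direction I would argue by contraposition. Suppose $M$ is perfect but some agent in $A\cup B$, say $a\in A$ (the case of an agent in $B$ is analogous), is matched to $b'=M(a)$ rather than to its top choice $b^\ast\neq b'$. Let $(a,b',c')$ and $(a_2,b^\ast,c_2)$ be the $M$-triples through $a$ and through $b^\ast$ (which are distinct because $b'\ne b^\ast$, so in particular $a_2\ne a$), and let $M'$ be obtained from $M$ by replacing these two triples with $(a,b^\ast,c_2)$ and $(a_2,b',c')$. Then $a$ improves, contributing $+1$; $b'$ and $b^\ast$ retain their $C$-partners and contribute $0$; $a_2$ only changes its $B$-partner, contributing at least $-1$; and $c'$ and $c_2$ merely exchange their $A$-partners $a$ and $a_2$, so by the single master list of $C$ exactly one of them improves and the other worsens, for a combined $0$. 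Hence $\Delta(M',M)\ge 0$, so $M$ is not strongly popular. The crucial point is routing the change through $b^\ast$'s triple, so that the only agent it can hurt beyond the automatically cancelling pair $\{c',c_2\}$ is $a_2$.

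For the ``if'' direction, assume every agent in $A\cup B$ is at its top choice (so $M$ is perfect) and let $M'\ne M$ be any perfect matching; write $n_A,n_B,n_C$ for the numbers of agents of $A$, $B$, $C$ whose partner differs between $M$ and $M'$. As each $a\in A$ sits at the top of a complete strict list, every moved agent of $A$ strictly worsens, so $\sum_{a\in A}\vote_a(M',M)=-n_A$, and likewise $\sum_{b\in B}\vote_b(M',M)=-n_B$. I would then establish two facts for the agents of $C$. First, $n_C\le n_A+n_B$: any $M$-triple that is unchanged on both its $A$--$B$ and its $B$--$C$ edge lies entirely in $M'$ and is thus unchanged on its $C$--$A$ edge as well, so every moved agent of $C$ belongs to a triple whose $A$--$B$ or $B$--$C$ edge also moved. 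Second, if $n_C>0$ then $\sum_{c\in C}\vote_c(M',M)\le n_C-2$: the set of $A$-agents that $M$ assigns to the moved $C$-agents coincides with the set that $M'$ assigns to them (both are the complement of the $A$-agents assigned to the unmoved $C$-agents), so the sum of the master-ranks of those partners is identical under $M$ and $M'$, whence the per-agent rank differences are nonzero integers summing to zero and at least one moved $C$-agent must improve and at least one must worsen. Combining, $\Delta(M',M)=-(n_A+n_B)+\sum_{c\in C}\vote_c(M',M)$ is at most $-(n_A+n_B)+n_C-2\le-2$ when $n_C>0$, and equals $-(n_A+n_B)\le-1$ when $n_C=0$ (where $M'\ne M$ forces $n_A+n_B\ge1$); in either case $\Delta(M,M')\ge1$, so $M$ is strongly popular.

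The step I expect to be the main obstacle is the accounting in the ``if'' direction: each of the two $C$-estimates on its own only yields $\Delta(M,M')\ge0$, so one has to combine the structural bound $n_C\le n_A+n_B$ with the master-list ``rank conservation'' argument to get a strict majority; one must also make sure the inequalities still hold --- with the unmatched agents of an imperfect $M'$ contributing $-1$ --- once the reduction to perfect competitors is dropped or, as above, built in from the start.
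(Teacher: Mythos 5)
Your proof is correct, and it uses the same two core ingredients as the paper's proof: for the ``only if'' direction, a local exchange of two triples in which the two affected master-list agents swap partners and their votes cancel by the common ranking, leaving the improving agent's $+1$ against a single possible $-1$; for the ``if'' direction, vote counting that combines ``every displaced agent of the non-master classes strictly loses (they sit at their top choices)'' with a rank-conservation argument on the master-list class. What you do differently is mainly organizational. You work in the rotated labeling (master list on $C$ rather than $A$), you explicitly reduce to perfect competitors first, and you tally votes class-wise, using the injection bound $n_C \le n_A + n_B$ (a moved $C$-agent forces a moved $A$- or $B$-agent in its own $M$-triple) together with the explicit ``ranks of partners of the moved $C$-agents are conserved, so at least one worsens'' step. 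The paper instead groups the votes by the triples of the competitor $M'$ and does a three-case analysis per triple, invoking the conservation phenomenon only implicitly (``the last case needs to occur at least once, since all agents in $A$ have the same preference list'') and without separating out imperfect competitors; your version makes both of these points precise, at the cost of the extra bookkeeping lemma. The only loose end in your write-up is the reduction to perfect $M'$: the completion of an imperfect $M'$ could coincide with $M$ itself, but that only happens when $M' \subsetneq M$, in which case $\Delta(M',M)=-3k<0$ directly, so the gap you flagged closes in one line.
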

\begin{proof}
	Without loss of generality we can assume that class $A$ is equipped with a master list. Let $M$ be a matching that assigns all agents in $B$ and $C$ their top choice. Let $M^\prime$ be any other matching and let $(a_i, b_j, c_k) \in \Mp$. In order to prove that $M$ is more popular than $M'$, we first distinguish three cases.
	\begin{itemize}
	    \item If $M(a_i) = b_j$, then $\vote_{a_i}(\Mp, M) + \vote_{b_j}(\Mp, M) + \vote_{c_k}(\Mp, M) \le 0$ follows from $b_j$ and $c_k$ being matched to their top choice in~$M$.
	    \item If $b_j \succ_a M(a_i) $, then one of $b_j$ and $c_k$ has to be matched to a different partner in $\Mp$ and we get $\vote_{a_i}(\Mp, M) + \vote_{b_j}(\Mp, M) + \vote_{c_k}(\Mp, M) \le 1 - 1 = 0$. 
	    \item If $M(a_i) \succ_a b_j$, then $\vote_{a_i}(\Mp, M) + \vote_{b_j}(\Mp, M) + \vote_{c_k}(\Mp, M) \le -1 - 1 = -2$.
	\end{itemize}
	We also know that any matching changing only the partners of agents in $B$ and $C$, but not in $A$, would trivially be less popular than~$M$. Thus, for $M'$, the last case needs to occur at least once, since all the agents in $A$ have the same preference list.
	All in all we get that $\Delta(\Mp, M) \le -2$ for all $M'$, and therefore $M$ is strongly popular. 
		
	For the opposite direction, we assume that the matching $M$ does not assign all agents in $B$ and $C$ their top choice. Without loss of generality assume that there is an agent $b_i \in B$ such that $b_i$'s top choice is $c_i$, yet $M(b_j) = c_i$. So we assume that $(a_i, b_j, c_i), (a_j, b_i, c_j) \in M$. Next consider the matching that is created by swapping $b_j$ and $b_i$, \ie by removing $(a_i, b_j, c_i), (a_j, b_i, c_j)$ from $M$ and adding $(a_i, b_i, c_i), (a_j, b_j, c_j)$ to $M$. If $b_i \succ_a b_j$, then $a_i$ and $b_i$ improve, while $a_j$ and $b_j$ get worse. Similarly, if $b_i \prec_a b_j$, then $a_j$ and $b_i$ improve, while $a_i$ and $b_j$ get worse. Thus $M$ was not strongly popular. 
\qed \end{proof}

\section{Open Problems}
\label{sec:openp}
Our work leaves three important questions open. The first, related to our results in Section~\ref{sec:hardness}, is the complexity of our problems with regard to complete preference lists. The technique of introducing so-called `boundary dummy-agents' of \citet{LP19} for showing hardness with complete lists for the stable matching problem does not seems to be applicable for popularity, since the presence of blocking edges if an agent is matched below her `boundary' does not restrict the set of popular matchings. Thus in order to reduce either from the problem with incomplete lists or from a separate problem altogether, a new technique might be needed.

Related to this is also the complexity of verifying whether a matching is $A \cup B$-popular with incomplete lists. Due to the inherent hardness of computing weight-optimal or even perfect matchings in 3 dimensions, we conjecture that this problem is \NP-complete as well. 

The third open problem, in case the problem of finding a popular matching with complete lists turns out to be intractable, is that of finding a popular matching in a \textsc{3dpm} derived from a 1-master list. Here, as opposed to instances derived from 2- and 3-master lists, popular matchings can exist. Interestingly enough the structure of these popular matchings seems to be quite limited, since in any situation the agents with the master list could be 'shifted up' to generate a matching that is more popular for at least $n-1$ agents, similarly as we argued in the proof of Theorem~\ref{thm:ml_nonex}. This might lead to results similar to the classification of popular matchings in house allocation instances by \citet{AIKM07}---for instance, if there is a perfect matching $M$ and agents $b_i, b_k \in B$ and $c_j, c_l \in C$ with $l > k \ge j > i$ such that $c_j \succ_{b_i} M(b_i)$ and $c_l \succ_{b_j} M(b_l)$, then $M$ is not popular. Other results of this type might pave the path to a full classification of popular matchings in these instances.

\bibliography{3dpopnocomm}
\bibliographystyle{splncs04nat}
\newpage

\section{Appendix}
First we present an instance of \textsc{3dspm} and show that it admits no popular matching. This instance will come handy later, in the proof of Theorem~\ref{thrm_np_pop_inc}.

\begin{restatable}{obs}{removeagent}
The instance $\mathcal{I}$ of \textsc{3dspm} obtained by removing any agent from an instance with $3$ agents per class and a 3-master list admits a popular matching.
    \label{obs:remove_agent}
\end{restatable}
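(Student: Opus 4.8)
The plan is to exhibit a single popular matching and verify it directly. First I would invoke the cyclic symmetry of the \textsc{3dspm} model: relabelling $A\to B\to C\to A$ is an automorphism of the problem that preserves matchings, vote vectors, and the property of being derived from a master list, so without loss of generality the removed agent lies in $C$. Because the preferences of $C$ are themselves derived from a master list and all lists are complete, the two surviving agents of $C$ have the same preference order over $A$; after relabelling I may assume $C=\{c_1,c_2\}$ with $B$'s master list inducing $c_1\succ c_2$, $C$'s master list being $a_1\succ a_2\succ a_3$ over $A$, and $A$'s master list being $b_1\succ b_2\succ b_3$ over $B$. The candidate is the ``greedy'' matching $M=\{(a_1,b_1,c_1),(a_2,b_2,c_2)\}$, which leaves $a_3$ and $b_3$ unmatched; note that $a_1,b_1,c_1$ sit at the top of their lists in $M$.

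To show $M$ is popular I must prove $\Delta(M',M)\le 0$ for every matching $M'$. A matching with at most one triple leaves an agent of each class unmatched, and for such $M'$ a short direct computation (splitting the vote by class and using that $a_1,b_1,c_1$ are at the top in $M$) already gives $\Delta(M',M)\le 0$; so I may assume $M'$ has exactly two triples. Such an $M'$ is determined by which agent of $A$ and which agent of $B$ it leaves unmatched together with the pairing of $c_1,c_2$ to the matched agents of $A$ and of $B$, so there are only finitely many shapes. I would then write $\Delta(M',M)=\sigma_A+\sigma_B+\sigma_C$ with $\sigma_X=\sum_{x\in X}\vote_x(M',M)$ and read off, from the master-list preferences, three intermediate bounds: $\sigma_C\le 0$ always; $\sigma_A\le 1$, with equality forcing $a_1$ unmatched in $M'$ and $M'(a_2)=b_1$; and symmetrically $\sigma_B\le 1$, with equality forcing $b_1$ unmatched in $M'$ and $M'(b_2)=c_1$. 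Each of these three is a two- or three-line case split on which agent of the relevant class is left unmatched.

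Finally I would combine the bounds. If $\sigma_A=1$, then $a_1$ is unmatched in $M'$, which forces $M'(c_1)\ne a_1$ and hence $\sigma_C\le -1$, while $M'(a_2)=b_1$ leaves $b_1$ matched and hence $\sigma_B\le 0$, so $\Delta(M',M)\le 1+0-1=0$. If instead $\sigma_A\le 0$ and $\sigma_B=1$, then $b_1$ is unmatched in $M'$; if $\sigma_C\le -1$ we get $\Delta\le 0+1-1=0$, and if $\sigma_C=0$ then $M'(c_2)=a_1$, so $a_1$ is matched, but (since $b_1$ is unmatched) to an agent of $B$ worse than its top choice, which forces $\sigma_A\le -1$ and again $\Delta\le -1+1+0=0$. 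In the remaining case $\sigma_A\le 0$ and $\sigma_B\le 0$, so $\Delta\le 0$ from $\sigma_C\le 0$. The main obstacle I anticipate is exactly this last bookkeeping: verifying that every way of simultaneously pushing two of the three class-sums towards their maxima is blocked by an incompatibility between the partial assignments they force (the clashes ``$b_1$ cannot be both matched to $a_2$ and unmatched'' and ``$a_1$ cannot be both unmatched and matched by $c_1$''). Establishing the three per-class bounds is routine once the parametrisation of two-triple matchings is pinned down.
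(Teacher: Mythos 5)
Your proposal is correct and follows essentially the paper's route: the paper likewise exhibits the matching consisting of all top choices together with all second choices in the reduced (relabelled) instance and verifies its popularity by checking the finitely many alternative matchings, which you do via structured per-class vote bounds rather than the paper's brute enumeration. One cosmetic slip: in the branch $\sigma_A\le 0$, $\sigma_B=1$, $\sigma_C=0$, the forced configuration need not be $M'(c_2)=a_1$ (it could instead be $M'(c_1)=a_1$ and $M'(c_2)=a_2$), but in either sub-case $a_1$ is matched while $b_1$ is unmatched, hence matched below its top choice, so your deduction $\sigma_A\le -1$ and the bound $\Delta(M',M)\le 0$ still go through.
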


\begin{proof}
    Without loss of generality assume that at least one agent has been removed from $A$ and that $A$ is the class of smallest cardinality. We distinguish two cases. 
    \begin{itemize}
        \item If $|A|=1$, then we can put all three top choices as the only triple in our matching. Improving the situation of any agent would lead to the respective top choice agent of the same class to be unmatched which would cancel out their votes. The matching is thus popular. 
        \item If $|A|=2$, we can include all top choices and all second choices as our two triples. By enumerating all possible matchings it can be seen that this matching is popular.\qed
    \end{itemize}
 \end{proof}
\popnphard*
\begin{proof}
	We reduce from \textsc{(2,2)-e3-sat}.
	\subsubsection{Construction} Let $X = \lbrace x_1, \dots, x_n\rbrace$ be the set of variables and $\mathcal{C}$ be the set of clauses. We first define the sets of agents $A,B,C$ of our \textsc{3dmpi} 
	instance.
	For each clause $\varphi = \lbrace x_i, x_j, x_k \rbrace$, where $x_i$ might be in  either positive or negative form, we add nine agents
	\begin{itemize}
	    \item three \emph{variable agents} $a_i^\varphi, a_j^\varphi, a_k^\varphi$ in $A$;
	    \item three \emph{dummy agents} $b_1^\varphi, b_2^\varphi, b_3^\varphi$ in $B$;
	    \item three \emph{dummy agents} $c_1^\varphi, c_2^\varphi, c_3^\varphi$ in $C$.
	    
	\end{itemize}
	
	For each variable $x_i \in X$ we include twelve agents 
	\begin{itemize}
	\item two agents $a_1^i, a_2^i$ in $A$;
	\item six agents $b_1^i, b_2^i, b_1^{i,+}, b_1^{i,-}, b_2^{i,+}, b_2^{i,-}$ in $B$;
	\item four agents $c_1^{i,+}, c_1^{i,-}, c_2^{i,+}, c_2^{i,-}$ in $C$.
	\end{itemize}
	
	Next we define our preference lists. 
	\begin{itemize}
	    \item For any clause $\varphi$ and any clause agent $a_i^\varphi$ such that $x_i$ appears is positive form in $\varphi$, we define the preference list to be $b_1^{i,+}\succ b_2^{i,+} \succ b_1^\varphi \succ b_2^\varphi \succ b_3^\varphi$.
	    \item For any clause $\varphi$ and any clause agent $a_i^\varphi$ such that $x_i$ appears is negative form in $\varphi$, we define the preference list to be $b_1^{i,-} \succ b_2^{i,-}\succ b_1^\varphi \succ b_2^\varphi \succ b_3^\varphi$.
	    \item For any $b_m^\varphi$ with $m \in \lbrace 1,2,3\rbrace$ we add the list
	$c_1^\varphi \succ c_2^\varphi \succ c_3^\varphi$.
	\item Lastly for any $c_m^\varphi$, with $x_i, x_j$, and $x_k$ being the variables that appear either in positive or negative form in $\varphi$ such that $i < j < k$, we add the list $a_i^\varphi \succ a_j^\varphi \succ a_k^\varphi$.
	\end{itemize}
	Note that this implies that all the clause and dummy agents belonging to one clause form a sub-instance derived from a $3$-master list. 
	Thus following Theorem~\ref{thrm:ml_nopop} and Observation~\ref{obs:remove_agent}, 
	in any popular matching at least one of the clause agents needs to be matched to a non-dummy agent.

	Next for the variable gadget of any variable $x_i$ we define the following preference lists.
	\begin{itemize}
	    \item Agent $a_1^i$ receives the preference list  $b_2^i \succ b_1^i$.
	    \item Agent $a_2^i$ receives the preference list $b_1^i \succ  b_2^i $.
		\item Agent $b_1^i$ receives the preference list $c_2^{i,-} \succ c_2^{i,+} $.
		\item Agent $b_2^i$ receives the preference list $c_1^{i,+} \succ c_1^{i,-} $.
		\item Agent $b_1^{i,+}$ receives the preference list $ c_1^{i,+}  $.
		\item Agent $b_1^{i,-}$ receives the preference list $c_1^{i,-}  $.
		\item Agent $b_2^{i,+}$ receives the preference list $ c_2^{i,+}$.
		\item Agent $b_2^{i,-}$ receives the preference list $c_2^{i,-}$.
	\end{itemize}
	
	Furthermore we call $\varphi^+, \psi^+$ the clauses where $x_i$ appears in positive form, and $\varphi^-, \psi^-$ the clauses where $x_i$ appears in negative form and turn to the preferences of the variable agents in $C$.
	\begin{itemize}
		\item For the agent $c_1^{i,+}$ we add the preference list  $a_i^{\varphi^+} \succ a_i^{\psi^+} \succ a_1^i$.
		\item For the agent $ c_1^{i,-}$ we add the preference list $a_i^{\varphi^-} \succ a_i^{\psi^-} \succ a_2^i$.
		\item For the agent $c_2^{i,+}$ we add the preference list $a_i^{\varphi^+} \succ a_i^{\psi^+} \succ a_2^i$.
		\item For the agent $c_2^{i,-}$ we add the preference list $a_i^{\varphi^-} \succ a_i^{\psi^-} \succ a_1^i$.
	\end{itemize}
	Note that in this construction the relative order of the preferences in $B$ and $C$ is the same, thus the preferences are subsets of an instance derived from a $2$-master list.
	For a representation of the construction in the variable gadget, see Figure~\ref{fig:reduction_gadget}.
	\begin{figure}[t]
	    \centering
	    \includegraphics{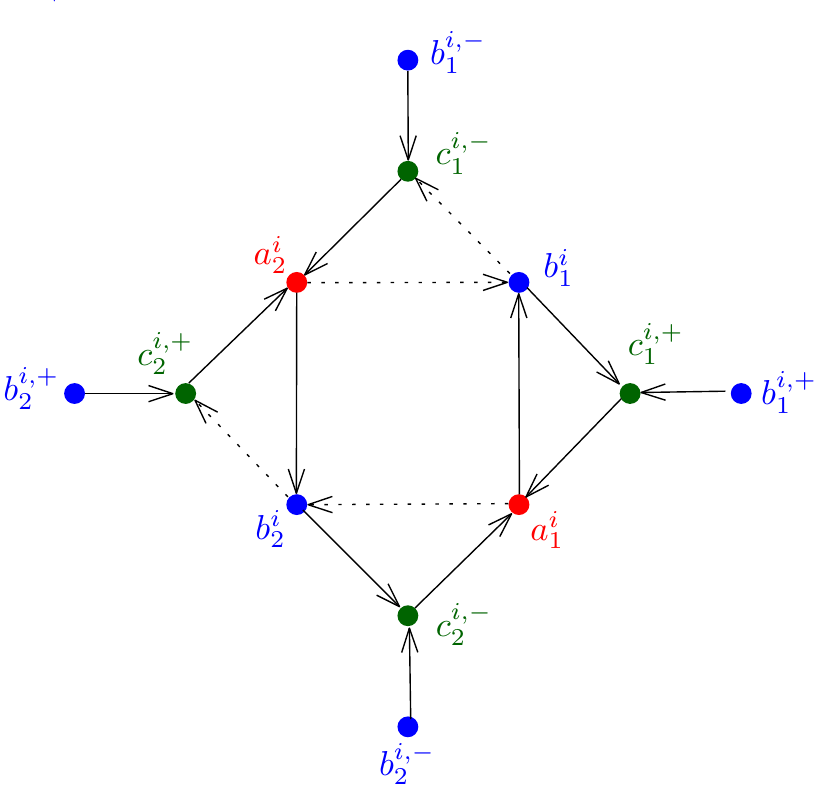}
	    \caption{The figure represents the clause gadget in the proof of Theorem~\ref{thrm_np_pop_inc}. The vertices denote the agents, with vertices of the same color belonging to the same class. Solid edges leaving a vertex represent the first choice of the corresponding agent, whereas dotted edges represent the second choice.}
	    \label{fig:reduction_gadget}
	\end{figure}	%
	
	\subsubsection{$\Rightarrow$} We first assume that the \textsc{(2,2)-e3-sat} instance is satisfiable and $\Phi$ is a satisfying assignment. We now show how to construct a popular matching~$M$.
	\begin{itemize}
	    \item 	For any variable $x_i$ that is set to true in $\Phi$ and appears in positive form  in the clauses $\varphi$ and $\psi$ we include the triples $(a_2, b_2, c_1^{i,-}), (a_1, b_1, c_2^{i,-} ), (a_i^\varphi, b_2^{i,+}, c_2^{i,+})$, and $(a_i^\psi, b_1^{i,+}, c_1^{i,+})$.
	    \item If $x_i$ is set to false in $\Phi$ and appears in negative form in the clauses $\varphi$ and $\psi$ we  include the triples $(a_2, b_1, c_2^{i,+}), (a_1, b_2, c_1^{i,+} ), (a_i^\varphi, b_2^{i,-}, c_2^{i,-})$, and $(a_i^\psi, b_1^{i,-}, c_1^{i,-})$. 
	    \item For any clause $\varphi$ where two variables $x_i, x_j$ are unsatisfied, we add the triples $(a_i^\varphi, b_1, c_1)$ and $(a_j^\varphi, b_2, c_2)$.
	    \item For any clause $\varphi$ with one variable $x_i$ unsatisfied we add the triple $(a_i^\varphi, b_1, c_1)$.
	\end{itemize}
	Now all we are left to do is to show that $M$ is popular. 
	First, we already know from Lemma~\ref{obs:remove_agent} that each matching in each clause gadget is popular if one does not match an additional $a_i^\varphi$ 
	to some agent in the clause gadget. If $a_i^\varphi$ would get matched to some agent in the clause gadget, the matching will get worse for $a_i^\varphi$, $b_{1/2}^{i, \pm}$, and $c_{1/2}^{i, \pm}$ (depending on how the matching is constructed), while at most three agents in the clause gadget can improve. Furthermore any perfect matching $\Mp$ in the variable gadget is not more popular 
	, since switching $a_1^i$ or $b_1^i$ to $\Mp$ increases the rank of one of them, while it decreases the rank of the other one compared to~$M$. If the matching $\Mp$ in the variable gadget is not perfect, at least two of the $a_{1/2}^i$ or $b_{1/2}^i$ are now unmatched and matching $c_{1/2}^{i, \pm}$ to any variable gadget would also unmatch two of the variable dummies, leading to $M$ being preferred by at least one more agent over $\Mp$. 
	Finally matching agents $a_i^\varphi$ and $a_i^\psi$ to the respective other $b_{1/2}^{\pm}$ would make two agents happier and two agents unhappier, thus also not leading to a more popular matching. Therefore in any matching $\Mp$ for any agent who prefers $\Mp$ to $M$ there is at least one (unique) agent who prefers $M$ to $\Mp$, which implies that $M$ is popular.	
	
	\subsubsection{$\Leftarrow$} Next assume that we are given a popular matching $M$. 
	Now we make two observations. 
	\begin{itemize}
	    \item For any clause $\varphi$, at least one agent $a_i^\varphi$ has to be matched to her variable gadget, since otherwise following Theorem~\ref{thrm:ml_nopop} the matching could not be popular, because the agents in the clause gadget are all derived from a master list. 
	    \item Now assume that there is a variable gadget where both at least one of $b_1^{i,-}$ and $b_2^{i,-}$ as well as at least one of $b_1^{i,+}$ and $b_2^{i,+}$ 
	    are matched to clause gadgets. Then without loss of generality $a_1^i$ and $b_1^i$ are unmatched in $M$. Since $M$ is popular, it has to be maximal and thus $c_1^{i,-}$ is matched to some $a_i^\varphi$ and two agents $b_k^\varphi, c_k^\varphi$ are unmatched. Taking the triples $(a_1^i,b_1^i, c_1^{i,-})$ and $(a_i^\varphi,b_k^\varphi, c_k^\varphi )$ improves the matching for $a_1^i,b_1^i, b_k^\varphi, c_k^\varphi$ and makes it worse for $c_1^{i,-}, b_1^{i,-},a_i^\varphi$. Thus $M$ could not have been popular.
	\end{itemize}  
	Therefore for each variable, $M$ matches either only $b_1^{i,-}$ and $b_2^{i,-}$ or  $b_1^{i,+}$ and $b_2^{i,+}$ to their respective clause gadgets and since each clause gadget has to be matched, this implies that we can construct a satisfying assignment.
\qed \end{proof}
\strongpopnphard*
\begin{proof}
We reduce from \textsc{perfect 3d-cyclic matching}. 

\subsubsection{Construction} Assume we are given a \textsc{perfect 3d-cyclic matching} instance $\mathcal{I}$ with sets $A_0 = \lbrace a_1, \dots, a_n \rbrace$, $B_0 = \lbrace b_1, \dots, b_n \rbrace$, and $C_0 = \lbrace c_1, \dots, c_n \rbrace$. For our \textsc{3dspmi} instance, we create a copy of each of the three classes, $A_0' = \lbrace a'_1, \dots, a'_n \rbrace$, $B_0' = \lbrace b'_1, \dots, b'_n \rbrace$, and $C_0' = \lbrace c'_1, \dots, c'_n \rbrace$ and we set $A = A_0 \cup A_0'$, $B = B_0 \cup B_0'$, and $C = C_0 \cup C_0'$.

Next we turn to the preferences. For each agent $x \in A_0 \cup B_0 \cup C_0$ let $n^x_1, \dots, n^x_k$ be her set of acceptable agents in $\mathcal{I}$ in an arbitrary order, such that the relative order between all agents of one class is the same 
Just as in Theorem~\ref{thm:ml_nonex}%
, we take the indices modulo $n$ and as such take $n+1 = 1$. 
\begin{itemize}
    \item For any $a_i \in A_0$ we create the preference list $b_i' \succ_{a_i} n^{a_i}_1 \succ_{a_i} \dots \succ_{a_i} n^{a_i}_k$. 
    \item For any $a_i' \in A_0'$ we create the preference list only consisting of $b_i'$.
    \item For any $b_i \in B_0$ we create the preference list  $n^{b_i}_1 \succ_{a_i} \dots \succ_{b_i} n^{b_i}_k$.
    \item For any $b_i' \in B_0'$ we create the preference list $c_i' \succ_{b_i'} c_{i+1}'$.
    \item For any $c_i \in C_0$ we create the preference list  $n^{c_i}_1 \succ_{c_i} \dots \succ_{c_i} n^{c_i}_k$.
    \item For any $c_i' \in C_0'$ we create the preference list $a_i \succ_{c_i'} a_{i-1}'$. 
\end{itemize}

\subsubsection{$\Rightarrow$}
First assume that $\mathcal{I}$ admits no perfect matching. We then show that the matching $M = \lbrace (a_i, b_i', c_i') \mid a_i \in A_0\rbrace$ is strongly popular. 
Let $\Mp$ be a matching different from~$M$.
For any $i \in [n]$, we define the vertex set $X_i = \lbrace a_i, a_i', \Mp(a_i), \Mp(\Mp(a_i)), b_i',c_i'\rbrace$ and also $\vote_i(\Mp) = \sum_{x\in X_i} \vote_x(\Mp, M)$. 
By the definition of popularity it holds that $\vote(\Mp, M) = \sum_{i = 1}^n \vote_i(\Mp)$.
Now we can distinguish four cases, based on whether $a_i$ and $a_i'$ are unmatched or matched to $b_i'$. 
\begin{itemize}
    \item If $\Mp(a_i) = b_i'$, then it holds that $\vote_i(\Mp) = 0$.
    \item If $\Mp(a_i) = a_i$, then it is easy to see that $\vote_i(\Mp) \le -2$.
    \item Otherwise if $\Mp(a_i) \neq b_i'$ and $\Mp(a_i') = b_i'$, then $a_i', \Mp(a_i)$, and $\Mp(\Mp(a_i))$ prefer $\Mp$ to $M$, while the rest of $X_i$ prefers $M$ to $\Mp$, which results in $\vote_i(\Mp) = 0$.
    \item If $\Mp(a_i) \neq b_i'$ and $\Mp(a_i') = a_i'$, then $  \Mp(a_i)$ and $\Mp(\Mp(a_i))$ prefer $\Mp$ to $M$, while the rest of $X_i$ prefers $M$ to $\Mp$, which results in $\vote_i(\Mp) \le -1.$
\end{itemize}
Since $\Mp \neq M$, $\Mp(a_i) \neq b_i'$ has to hold for at least one $a_i \in A_0$. However since no perfect matching exists, there has to be an agent $a_i \in A$ not matched to an agent in $B_0$. From this follows that either  $\Mp(a_i) = a_i$ or  $M(a_i') = a_i'$  has to hold if this $a_i$ is matched to $b_i'$, 
which would imply that $\vote_i(\Mp) < 0$. Therefore we get that $\vote(\Mp, M) = \sum_{i = 1}^n \vote_i(\Mp) < 0$ and thus $M$ is strongly popular.

\subsubsection{$\Leftarrow$}
For the other direction assume that $\mathcal{I}$ admits a perfect matching $\Mp_0$ and consider the two matchings $M =\lbrace (a_i, b_i', c_i') \mid a_i \in A_0\rbrace$ and $\Mp = \Mp_0 \cup \lbrace a_i', b_i', c_{i+1}' \rbrace$. First it is easy to see that for any $i \in [n]$, all of $a_i ,b_i', c_i'$ prefer $M$ to $\Mp$ while $b_i,c_i,$ and $a_i'$ prefer $\Mp$ to $M$. Thus neither $M$ nor $\Mp$ is strongly popular.

Now assume that $\Mp$ is a matching different from $M$. Let $\vote_i$ be as in the previous case. If $\Mp(a_i) = b_i'$, then nothing changes for the agents in $X_i$ and thus $\vote_i(\Mp) = 0.$ If $\Mp(a_i) = a_i$, then we get $\vote_i(\Mp) \le -2$ and if $\Mp(a_i) \in B_0$, we get $\vote_i(\Mp) \le 0$. Thus the matching $\Mp$ was not strongly popular either and therefore no strongly popular matching exists.
\qed \end{proof}

\strongpopvnphard*
\begin{proof}
This immediately follows from the proof of Theorem~\ref{thrm_strongpop_np} by setting $M = \lbrace (a_i, b_i', c_i') \mid a_i \in A_0\rbrace.$ The membership in \NP\ follows from the fact that any matching that is not more popular than $M$ serves as a witness.
\qed \end{proof}
\popvnphard*
\begin{proof}
Here membership in \NP\ immediately follows from the fact that any matching that is more popular than $M$ serves as a witness, since it can be computed in polynomial time whether a matching is more popular than a given matching.
We reduce from \textsc{perfect 3d-cyclic matching} and show how to modify the proof of Theorem~\ref{thrm_strongpop_np}.

\subsubsection{Construction} We again assume we are given a \textsc{perfect 3d-cyclic matching} instance $\mathcal{I}$ with classes $A_0 = \lbrace a_1, \dots, a_n \rbrace$, $B_0 = \lbrace b_1, \dots, b_n \rbrace$, and $C_0 = \lbrace c_1, \dots, c_n \rbrace$. Without loss of generality we assume that $n$ is odd. For our \textsc{3dspmi} instance, we create a copy of each of the three sets, $A_0' = \lbrace a'_1, \dots, a'_n \rbrace$, $A_0'' = \lbrace a''_1, \dots, a''_n \rbrace$, $B_0' = \lbrace b'_1, \dots, b'_n \rbrace$, $B_0'' = \lbrace b''_1, \dots, b''_n \rbrace$, $C_0' = \lbrace c'_1, \dots, c'_n \rbrace$, and $C_0'' = \lbrace c''_1, \dots, c''_n \rbrace$, and we set $A = A_0 \cup A_0' \cup A_0''$,  $B = B_0 \cup B_0'\cup B_0''$, and $C_0 \cup C_0'\cup C_0''$.

Next we turn to the preferences. For any agent $x \in A_0 \cup B_0 \cup C_0$ let $n^x_1, \dots, n^x_k$ be their set of her acceptable agents in $\mathcal{I}$ in some arbitrary order, such that the relative order between all agents of one class is the same. 
\begin{itemize}
    \item For any $a_i \in A_0$ we create the preference list $b_i' \succ_{a_i} n^{a_i}_1 \succ_{a_i} \dots \succ_{a_i} n^{a_i}_k$. 
    \item For any $a_i' \in A_0'$ we create the preference list only consisting of $b_i'$.
    \item For any $a_i'' \in A_0''$ we create the preference list consisting of $b_i \succ_{a_i''} b_i''$.
    \item For any $b_i \in B_0$ we create the preference list  $c_i'' \succ_{b_i} n^{b_i}_1 \succ_{b_i} \dots \succ_{b_i} n^{b_i}_k$.
    \item For any $b_i' \in B_0'$ we create the preference list $ c_i' \succ_{b_i'} c_{i+1}'$.
    \item For any $b_i'' \in B_0''$ we create the preference list only consisting of $c_{i+1}''$.
    \item For any $c_i \in C_0$ we create the preference list  $n^{c_i}_1 \succ_{c_i} \dots \succ_{c_i} n^{c_i}_k$.
    \item For any $c_i' \in C_0'$ we create the preference list $a_{i-1}' \succ_{c_i'} a_{i}$.
    \item For any $c_i'' \in C_0''$ we create the preference list $a_{i-1}'' \succ_{c_i''} a_i''$ if $i$ is odd and $a_{i}'' \succ_{c_i''} a_{i-1}''$ if $i$ is even.
\end{itemize}
We now show that the matching $M = \lbrace (a_i, b_i', c_i') \mid a_i \in A_0\rbrace \cup \lbrace (a_i'', b_i, c_i'') \mid b_i \in B_0\rbrace $ is popular if and only if no perfect matching exists in~$\mathcal{I}$.

\subsubsection{$\Rightarrow$}
First assume that there is no perfect matching and let $\Mp$ be a matching different from $M$.
We distinguish two cases. 

In the first case we assume that every $b_i''$ is matched to $c_{i+1}''$.
For any $b_i$ there are now two cases. 
\begin{itemize}
    \item For any $b_i$ that is matched to an agent in $C_0$, we let $a_{j} = \Mp(\Mp(b_i))$ and $X_i = \{b_i, b_i'', c_{i+1}'', a_i'', a_j, a_j', b_j', c_j', \Mp(b_i)\}$. 
    \item For any $b_i$ with $\Mp(b_i) = b_i$ we can get a unique bijection to an agent $a_{j}$ with $\Mp(a_j) \in \{a_j, b_j'\}$, \ie agents from $A_0$ whose matching partner is not from $B_0$. In this case we again set $X_i = \{b_i, b_i'', c_{i+1}'', a_i'', a_j, a_j', b_j', c_j'\}$. 
\end{itemize}

As in the proof of Theorem~\ref{thrm_strongpop_np} we define $\vote_i(\Mp) = \sum_{x\in X_i} \vote_x(\Mp, M)$ and see that $\vote(\Mp, M) = \sum_{i = 1}^n \vote_i(\Mp)$.

Now let $i\in [n]$ be odd and assume that $\Mp(b_i) \in C_0$. Then since $b_{i}''$ is matched to $c_{i+1}''$ and $a_i''$ must be matched to $b_i''$, we get that at most $b_i'', c_{i+1}'', a_j', c_j', \Mp(b_i)$ might prefer $\Mp$ to $M$, while $b_i, a_j, a_i'', b_j'$ prefer $M$ to $\Mp$, which leads to $\vote_i(\Mp) \le 1$.
Similarly if $i\in [n]$ is even and $\Mp(b_i) \in C_0$, we conclude that $\vote_i(\Mp) \le -1$.

For $i\in [n]$ is odd and $\Mp(b_i) = b_i$ we observe that since $a_i''$ must be matched to $b_i''$, agents $c_{i+1}''$ and $b_i''$ prefer $\Mp$ to $M$, while $b_i$ and $a_i''$ prefer $M$ to $\Mp$. By the same reasoning as in Theorem~\ref{thrm_np_pop_inc} we also get that the sum of votes among $a_j, a_j', b_j', c_j'$ is at most $0$ and thus $\vote_i(\Mp) \le 0$. This argumentation also extends to an even $i$ with $\vote_i(\Mp) \le -1$. Since no matching is perfect, there must be at least one $b_i$ with $\Mp(b_i) = b_i$ and we get $\sum_{i = 1}^n \vote_i(\Mp) \le -1 +\sum_{i = 1}^n (-1)^n \le 0$. Therefore in this case $\Mp$ is not more popular than~$M$.

We now arrived to the second case, in which there is some $b_i''$ who is not matched to $c_{i+1}''$. Let $i_{1}, \dots, i_{k}$ be the sorted list of the indices of such agents in~$B''$. Further for any $i_j$, we define $X_{i_j}$ analogously to $X_i$ in the previous case. 

Now for any $i_j$, if $b_{i_j}$ is matched to an agent in $C_0$, we set $a_l = \Mp(\Mp(b_{i_j}))$ and get that at most $a_l', b_l'$, and $\Mp(b_{i_j})$ could get better, while $b_{i_j}, a_l, b_l', a_i''$ would get worse, which leads to $\vote_i(\Mp) \le -1$.%
If $b_{i_j}$ is not matched to an agent in $C_0$, then there are two cases. 
\begin{itemize}
    \item If $\Mp(b_{i_j-1}'') = c_{i_j}''$, then $a_{i_j}''$ and $b_{i_j}$ must prefer $M$ to $\Mp$, while $b_{i_j}''$ and $c_{i_j+1}''$ are at most indifferent. Combined with the observation that the vote of the other agents cannot be positive, we get $\vote_i(\Mp) \le -2.$
    \item  However if $\Mp(b_{i_j-1}'') \neq c_{i_j}''$, then either nothing changes in the matching or the only agent who can still improve is $a_l'$. However if $a_l'$ improves, then $a_l$ and $b_l$ must get worse, which implies that $\vote_i(\Mp)\le -1$. 
\end{itemize}

This implies that  $\sum_{\ell = i_j}^{i_j+1} \vote_\ell(\Mp)$ is at most $0$ if $i_j + 1 = i_{j+1}$ or $\sum_{\ell = i_j}^{i_j+1} \vote_\ell(\Mp) \le \sum_{j=1}^k 1 + \vote_{i_j}(\Mp) \le 0$ if $i_j = 1 \neq i_{j+1}$.
 Thus we get that $\vote(\Mp, M) = \sum_{j=1}^k \sum_{\ell = i_j}^{i_j+1} \vote_\ell(\Mp) \le 0$ and therefore $M$ is popular. 

\subsubsection{$\Leftarrow$}
If there is a perfect matching $M_0$ we can take the matching $\Mp = M_0 \cup \{(a_i', b_i, c_{i+1}'), (a_i'', b_i'', c_{i+1}'') \mid i \in [n]\}$ and observe that every $a_i, b_i, b_i', a_i'',$ and $c_i''$ with an even $i$ prefer $M$ to $\Mp$, while every $c_i, c_i', b_i'', a_i'$ and $c_i''$ with an odd $i$ prefer $\Mp$ to~$M$. Since $n$ is odd, this implies that $\vote(\Mp, M) \ge 1$ and therefore $\Mp$ is more popular than $M$.
\qed \end{proof}
\abpop*
\begin{proof}
	~\subsubsection{Construction} The construction of the proof is very similar to the \NP-hardness proof of \citet[Theorem 1]{BM10}. 
	We  reduce from the problem of finding a popular matching in bipartite instances with ties.
	More formally given a bipartite graph $G = (U \cup W, E)$ such that each agent in $U$ 
	has a strict preference list over a subset of agents in $W$ and each agent in $W$ either has a strict preference list or a single tie. 
	This problem was proven to be \NP-hard by \citet{CHK17}. 
	Let $W^t$ be the set of agents in $W$ who have a tie in their preference list. 
	We set
	\begin{itemize}
	    \item $A = \lbrace a_i \mid u_i \in U \rbrace$ to be a copy of $U$;
	    \item $B = \lbrace b_i \mid w_i \in W \rbrace$ to be a copy of $W$;
	    \item $C = \lbrace c_i \mid w_i \in W^t \rbrace \cup \lbrace c_{ij} \mid w_i \in W \setminus W^t, u_j \in U\rbrace$.
	\end{itemize} 
	Now we turn to the preferences.
	\begin{itemize}
	    \item Each $a_i \in A$ has the same preference list as the corresponding $u_i \in U$.
	    \item If $w_i \in W^t$, then $b_i$'s preference list is $c_i$ alone, while $c_i$'s preference list is the preference list of $w_i$. 
	    \item Further if $w_i \notin W^t$, we add the agents $c_{ij}$ in the order of the respective $u_j$ as the preference list of $b_i$ and add $a_j$ as the single element of $c_{ij}$'s list. 
	\end{itemize}
		
	\subsubsection{Correctness} First we notice that there is a one-to-one correspondence of matchings in our original graph and in our constructed \textsc{3dpmi} instance. 
	For a matching $M$ in $G$, we can construct the matching $\overline M \coloneqq \lbrace (a_j, b_i, c_{ij}) \mid (m_j, w_i ) \in M, w_i \notin W^t\rbrace \cup \lbrace (a_j, b_i, c_i) \mid (m_j, w_i ) \in M, w_i \in W^t\rbrace $ and vice versa. 
	We will show that $M$ is popular if and only if $\overline M$ is $A \cup B$-popular. 
	
	First assume that $M$ is not popular. Then there is a more popular matching~$\Mp$. If $m_i$ prefers $\Mp$ to $M$, then $a_i$ also prefers $\overline \Mp$ to $\overline M$ since the preference lists are the same. An analogous statement holds for $w_i \notin W^t$. If $w_i \in W^t$ and $w_i$ is matched in both $M$ and $\Mp$ then $w_i$ will also be indifferent between $\overline M$ and $\overline \Mp$, and if $w_i$ is matched in $\Mp$ but not in $M$ then $w_i$ will also prefer $\Mp$ to~$M$. Thus $\overline \Mp$ is also more $A\cup B$-popular than~$\overline M$.
	
	The same argument also holds for a matching in the $A \cup B$-popular instance that is not popular, and thus $M$ is popular if and only if $\overline M$ is $A \cup B$-popular. Consequently our instance has a popular matching if and only if the instance we reduced to has a $A \cup B$-popular matching. 
\qed \end{proof}

\end{document}